\documentclass[conference]{IEEEtran}
\IEEEoverridecommandlockouts
\usepackage{cite}
\usepackage{amsmath,amssymb,amsfonts}
\usepackage{graphicx,comment}
\usepackage{textcomp}
\usepackage{xcolor}
\usepackage{cuted}
\def\BibTeX{{\rm B\kern-.05em{\sc i\kern-.025em b}\kern-.08em
    T\kern-.1667em\lower.7ex\hbox{E}\kern-.125emX}}

\usepackage{bbold}
\usepackage{graphicx}
\usepackage{gensymb}
\usepackage{amsmath}
\usepackage{amssymb}
\usepackage{amsthm}
\usepackage{epstopdf}
\usepackage{color}
\usepackage{hyperref}
\usepackage{subcaption,bbm}
\usepackage{multirow}
\usepackage{algorithm}
\usepackage[noend]{algpseudocode}
\DeclareGraphicsRule{.tif}{png}{.png}{`convert #1 `dirname #1`/`basename #1 .tif`.png}

\newtheorem{theorem}{Theorem}
\newtheorem{lemma}{Lemma}

\newtheorem{definition}{Definition}
\newtheorem{remark}{Remark}
\newtheorem{proposition}{Proposition}

\newcommand{\bE}{\mathbb{E}}
\newcommand{\bP}{\mathbb{P}}
\newcommand{\te}{\theta}
\newcommand{\lmu}{_{i,\mu}}
\newcommand{\llm}{_{i,\lambda}}
\newcommand{\cE}{\mathcal{E}}

\newcommand{\up}{^{\prime}}
\newcommand{\cO}{\mathcal{O}}

\newcommand{\cG}{\mathcal{G}}
\newcommand{\id}{\mathbbm{1}}

\newcommand{\bN}{\mathbb{N}}
\newcommand{\cL}{\mathcal{L}}

\newtheorem{assumption}{Assumption}

\begin{document}

\title{A Constrained Reinforcement Learning Based Approach for Service Placement over Wireless Edge}

\title{Resource Constrained Reinforcement Learning for Service Placement over the Wireless Edge}

\title{Learning Augmented Whittle Index Policy for Service Placement at the Network Edge}

\title{Learning Augmented Index Policy for \\Optimal Service Placement at the Network Edge}

\author{\IEEEauthorblockN{Guojun Xiong$^*$\thanks{$^*$Equal contribution.  Ordering determined by inverse alphabetical order.}}
\IEEEauthorblockA{\textit{SUNY-Binghamton University} \\
Binghamton, NY 13902, USA \\
gxiong1@binghamton.edu}
\and
\IEEEauthorblockN{Rahul Singh$^*$}
\IEEEauthorblockA{\textit{Indian Institute of Science} \\
Bengaluru, Karnataka 560012, India \\
rahulsingh@iisc.ac.in}
\and
\IEEEauthorblockN{Jian Li$^*$}
\IEEEauthorblockA{\textit{SUNY-Binghamton University} \\
Binghamton, NY 13902, USA \\
lij@binghamton.edu}
}

\maketitle

\begin{abstract}
We consider the problem of service placement at the network edge, in which a decision maker has to choose between $N$ services to host at the edge to satisfy the demands of customers. Our goal is to design adaptive algorithms to minimize the average service delivery latency for customers. We pose the problem as a Markov decision process (MDP) in which the system state is given by describing, for each service, the number of customers that are currently waiting at the edge to obtain the service.  However, solving this $N$-services MDP is computationally expensive due to the curse of dimensionality. To overcome this challenge, we show that the optimal policy for a single-service MDP has an appealing threshold structure, and derive explicitly the Whittle indices for each service as a function of the number of requests from customers based on the theory of Whittle index policy.

Since request arrival and service delivery rates are usually unknown and possibly time-varying, we then develop efficient learning augmented algorithms that fully utilize the structure of optimal policies with a low learning regret.  The first of these is UCB-Whittle, and relies upon the principle of optimism in the face of uncertainty. The second algorithm, Q-learning-Whittle, utilizes Q-learning iterations for each service by using a two time scale stochastic approximation. We characterize the non-asymptotic performance of UCB-Whittle by analyzing its learning regret, and also analyze the convergence properties of Q-learning-Whittle. Simulation results show that the proposed policies yield excellent empirical performance.

\end{abstract}

\begin{IEEEkeywords}
Service placement, network edge, Markov decision process, Whittle index policy, reinforcement learning
\end{IEEEkeywords}

\section{Introduction}\label{sec:intro}

An increasing number of devices such as smart wearables, mobile phones, are trending towards various high data-rate services such as video streaming, web browsing, software downloads. This has urged service providers, e.g., content delivery network (CDN), to pursue new service technologies that yield a good quality of experience. One such technology entails network densification by deploying \textit{edge servers} and each of which is empowered with a small base station (SBS), e.g. the storage-assisted future mobile Internet architecture and cache-assisted 5G systems~\cite{andrews2012femtocells}.  Service requests are often derived from dispersed sources, for example customers, Internet-of-Thing (IoT) devices, embedded systems, etc. As a result, the distributed infrastructure of such a storage-assisted system operator (e.g., Akamai CDN, Google Analytics, etc) typically has a hub-and-spoke model as illustrated in Figure~\ref{fig:system}.

In such systems, customers send service requests to geographically distributed edge servers near them. If the requested service is available with the edge, then it is satisfied with a negligible latency. Otherwise, the request is sent to the central service warehouse (CSW, see Figure~\ref{fig:system}), which then provides the service but the resulting latency is higher. 
While the CSW is often located in a well-provisioned data center, resources are typically limited at edges, i.e., the capacity of edge servers is usually limited. 
Furthermore, the network interface and hardware configuration of edge servers might greatly impact the response of services \cite{chen2017empirical}. These issues pose significant challenges to ensuring that customers waiting at the network edge for a service face minimal delay. A major challenge in such storage-assisted network edges is to ensure that the delays faced by customers that wait at the network edges is ``minimal''. In this work, we propose to use dynamic policies~\cite{ross2013applied,ross2014introduction} in order to make decisions regarding which set of services should be placed at each edge of the network so as to minimize the cumulative delays of customers. We refer to the problem of making optimal dynamic decisions regarding service placement as the \textit{optimal service placement problem}. 

\begin{figure}
	\centering
	\includegraphics[width=0.35\textwidth]{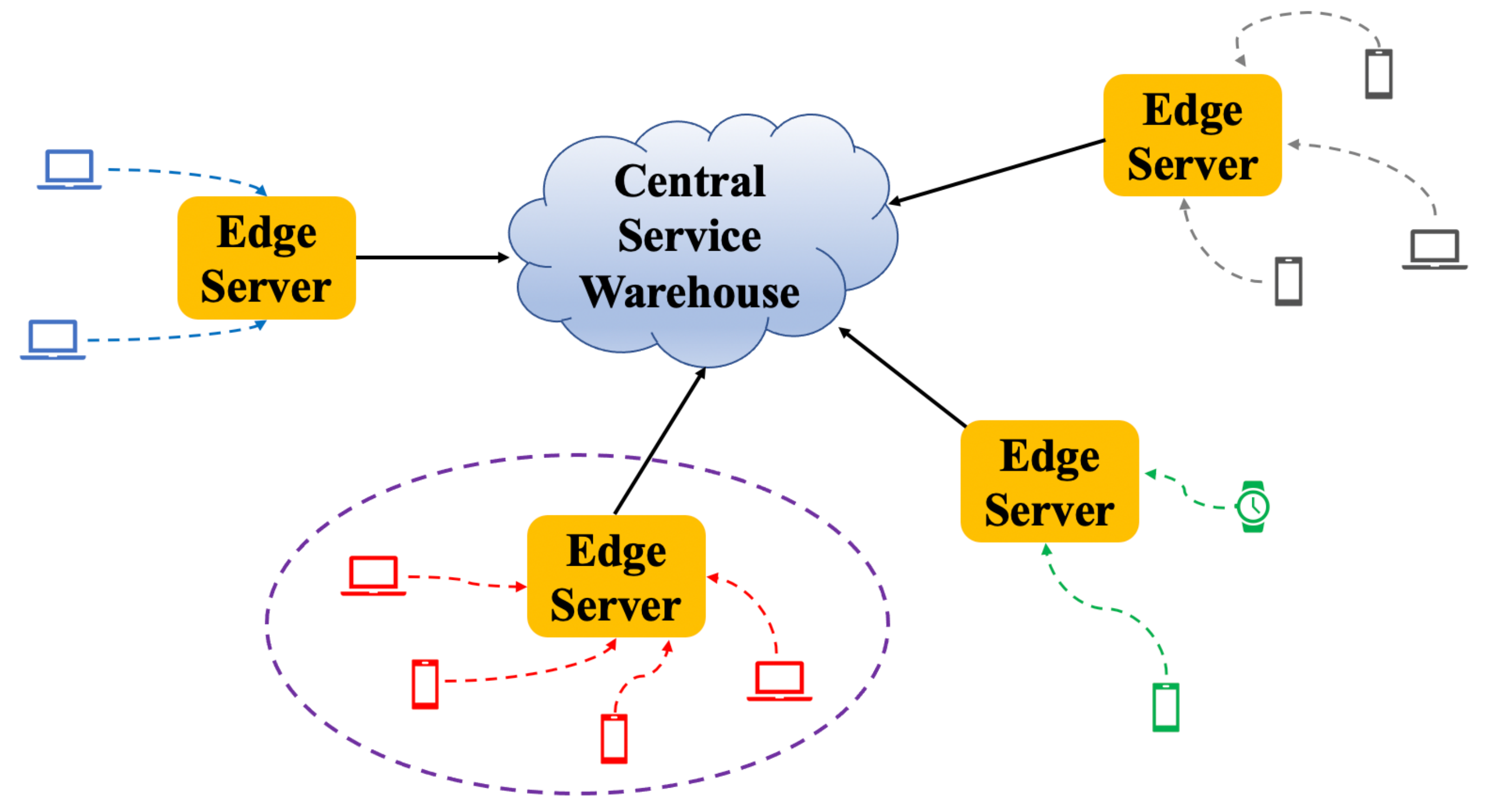}
	\vspace{-0.05in}
	\caption{The distributed model for a typical service placement comprises a single central service warehouse and multiple edge servers connected by cheap backhaul links. }
	\label{fig:system}
	\vspace{-0.25in}
\end{figure}
Existing works e.g., \cite{he2018s,farhadi2019service,poularakis2019joint,pasteris2019service,lin2020service} that address the issue of placing services at the network edges so as to minimize the customer delays suffer from severe limitations. For example, they mostly restrict their analysis to a deterministic system model, so that the resulting service placements are highly pessimistic. Secondly, their solutions often assume that the parameters which describe the system dynamics, for example service request rates, service delivery rates etc., are fixed and known to the service provider. In reality, these quantities are typically unknown and possibly also time-varying.

We instead model the operation of network edges by utilizing an appropriate stochastic framework, and derive easily implementable policies that make decisions regarding which service should be placed at edges. Secondly, when the system parameters are unknown, we derive novel machine learning (ML) techniques in order to efficiently learn the unknown network system parameters and make optimal service placement decisions dynamically. This is particularly important since with the advent of cost-effective ML solutions, the network operator can deploy them in real-time in order to optimize the system performance. In particular, we raise the following question: \emph{Can we leverage ML for maximizing the benefits of storage resources available at the edges and optimizing the performance of service placement at the network edge? }

\subsection*{Main Results}
We consider the setup where there is a single network edge with a capacity of $K$ units shared by $N$ services. New requests for each service, i.e. customers that demand this service, are modeled by Poisson processes, while the time taken to deliver services to customers is taken to be exponential random variable.  A decision maker has to choose which $K$ out of these $N$ services to be placed at the edge at each time $t$.  

We first focus on the scenario when the system parameters, e.g. arrival rates and service delivery rates are known to the decision maker.  We pose the problem of placing services at the network edge so as to minimize the cumulative delays faced by customers waiting in queue as a Markov decision process (MDP)~\cite{puterman2014markov,ross2013applied,ross2014introduction} in which the system state is given by describing, for each service, the number of customers that are currently waiting at the edge to obtain this service.  This MDP is intractable in general due to the curse of dimensionality~\cite{bellmanbook,bertsekas1995dynamic}, i.e., the size of state-space and computational complexity of obtaining optimal solution grows exponentially with $N$, and also since the optimal solution is complicated, has a complex structure that depends on the entire state-space description, and hence not easily implementable. 

Our MDP can be viewed as a special instance of the Restless Multi-Armed Bandit (RMAB) \cite{whittle1988restless,gittins2011multi}, in which we can view each service as a bandit, and its queue length (the number of customers waiting to receive this service) as the state of the bandit. Using Little's law~\cite{john1961little,kleinrock1976queueing}, delay minimization is seen to be equivalent to minimizing the cumulative average queue lengths. Thus, the problem becomes equivalent to minimizing the costs (maximizing the negative value of queue lengths) earned by ``pulling arms of bandits'' (i.e., placing services at edge) under the constraint that the total number of services placed at the edge is less than the edge capacity. However, the RMAB based formulation in general suffers the curse of dimensionality and is provably hard \cite{papadimitriou1994complexity}. 

The Whittle index policy~\cite{whittle1988restless} is computationally tractable, and known to be asymptotically optimal~\cite{verloop2016asymptotically} for RMAB. We propose to use it in order to make optimal service placement decisions.  We show that our MDP is indexable, and derive explicitly the Whittle indices for each service as a function of the number of customers that are waiting to receive it.

Since the system parameters, e.g. service request arrival and delivery rates etc., are typically unknown and time varying, we further explore the possibility of designing efficient learning augmented algorithms to address these challenges. Though we could use reinforcement learning (RL) algorithms to resolve this issue, the learning regret\footnote{Loosely speaking, the learning regret of a learning algorithm is the rate with which its decisions converge to the optimal ones.}~\cite{auer2002finite,bubeck2012regret,lattimore2020bandit} of the resulting solution
scales linearly with the size of the state space~\cite{jaksch2010near}, and hence it would be too slow to be of any practical use. We thus derive efficient RL algorithms that utilize the structure of the service placement problem in order to yield a low learning regret. 

The first such algorithm entitled Upper Confidence Bound (UCB)-Whittle relies upon the principle of optimism in the face of uncertainty \cite{lai1985asymptotically,auer2002finite,mete2020reward}.  UCB-Whittle combines the asymptotic optimality property of the Whittle policy~\cite{verloop2016asymptotically} with the ``efficient exploratory behavior'' of the UCB-based~\cite{auer2002finite} learning algorithms. Thus, UCB-Whittle maintains a confidence ball that contains the true (unknown) parameter with a high probability.  Within an RL episode it uses the Whittle index policy based on an optimistic estimate of the parameter from within this confidence ball. Since the computational complexity of deriving Whittle index policy scales linearly with the number of services, it can be easily implemented, and moreover we also show that its learning regret is low. 

Our second algorithm entitled Q-learning-Whittle, utilizes Q-learning iterations for each service by using a two time-scale stochastic approximation.  It leverages the threshold-structure of the optimal policy to learn the state-action pairs with a deterministic action at each state.  Hence, Q-learning-Whittle only learns Q-value of those state-action pairs following the current threshold policy.  This new update rule leads to a substantial enhancement over the sample efficiency of Q-learning.  Our numerical evaluations show that the proposed policies yield excellent empirical performance.

The rest of the paper is organized as follows.  We describe the system model and the MDP formulation in Section~\ref{sec:formulation}. We present the Whittle index policy in Section~\ref{sec:lagrangian-relaxation}.  We design novel learning augmented index policies UCB-Whittle and Q-learning-Whittle in Section~\ref{sec:learning}. The proofs of their performance are given in Section~\ref{sec:learning-proof}. Numerical results are presented in Section~\ref{sec:sim}. We discuss related work in Section~\ref{sec:related} and conclude the paper in Section~\ref{sec:conclusion}.  Additional proof details are presented in Appendix~\ref{sec:app}.

\section{System Model}
\label{sec:formulation}

Consider a heterogeneous network as shown in Figure~\ref{fig:system} in which there are multiple edge servers that are geographically distributed and provide services to customers. Each single edge server is assigned an area, and customers that fall within this area send their requests for various services to it.  When a customer makes request for a service to its edge server, and if this service is available with the edge, then it is delivered to the customers directly.  Otherwise, the request is sent to the CSW, which then provides service to the customers, though at the cost of a longer latency.  Since a single service can be replicated effortlessly at multiple edges, we focus exclusively on the scenario when there is only a single edge server.  There is a single service provider that provides $N$ distinct services that are indexed by $i\in\mathcal{N}=\{1,\cdots, N\}$.  These are provided to customers via the edge server.  For simplicity, we assume that all services are of unit size\footnote{Our model can be easily generalized to the case when different services are of varied sizes.}. The capacity of the edge server is equal to $K$ units, where $K\in\mathbb{N}_+$, and $K<N$. Thus, at any given time, only $K$ out of $N$ services can be placed at the edge server.

\begin{figure}
	\centering
	\includegraphics[width=0.35\textwidth]{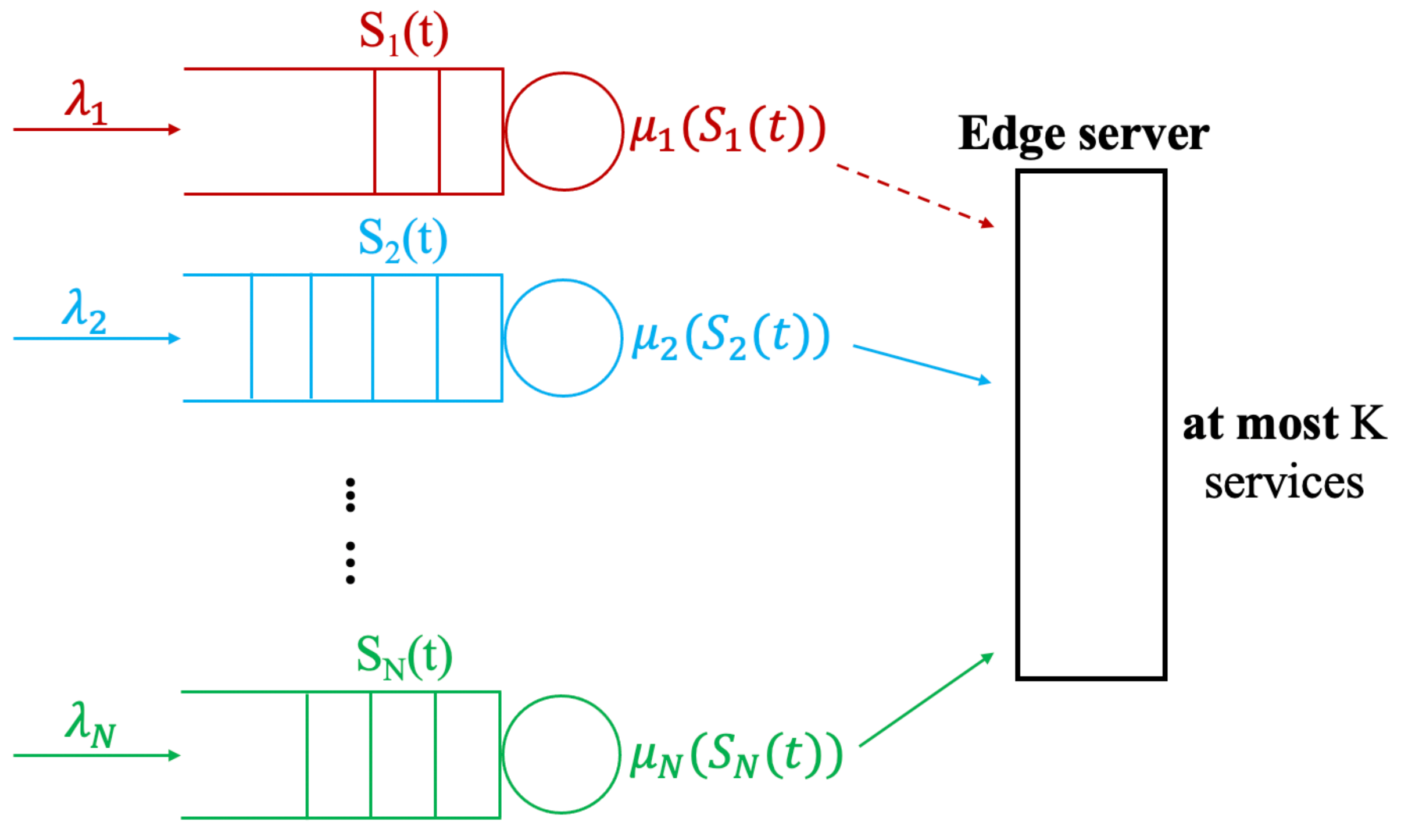}
	\vspace{-0.05in}
	\caption{A service placement system where the edge server can host at most $K$ services at any moment in time.}
	\label{fig:queue}
	\vspace{-0.2in}
\end{figure}

We now formulate the problem of making decisions regarding which service should be placed at the edge, as an MDP~\cite{puterman2014markov}. We discuss the system states, actions, rewards and controlled transition probabilities for this MDP.

Requests for service $i\in\mathcal{N}$ from customers arrive at the edge server according to a Poisson process with arrival rate $\lambda_i$. The time taken to deliver service $i$ to a customer is a random variable that is exponentially distributed with mean $1/\mu_i$. The delivery times are independent across customers and services. The number of outstanding requests (customers) for $i$ at time $t$ are denoted by $S_i(t)$. Denote $\vec{S}(t):=(S_1(t),\cdots,S_N(t))$. Let $A_i(t)\in \{0,1\}$ be the decision for service $i$ at time $t$, i.e., $A_i(t)=1$ means that service $i$ is placed at edge at time $t$, while $A_i(t)=0$ means that it is not available at edge.  Denote $\vec{A}(t):=(A_1(t),A_2(t),\ldots,A_N(t))$. In accordance with the terminology for RMAB, when $A_i(t)$ is $1$, we say that bandit $i$ is active, while when $A_i(t)$ is $0$, we say that it is passive. Decisions are made only at those time instants when either a new request arrives, or a service delivery occurs, so that $\vec{A}(t)$ stays constant in between these instants.

The state of the $i$-th queue can change from $S_i$ to either $S_{i}+1$ or $(S_i-1)_+$.  The transition rates are as follows: 
\begin{align}\label{eq:transition}
\begin{cases}
\vec{S}\rightarrow\vec{S}+\vec{e}_i,\quad\text{with transition rate $b_i(S_i, a_i)$},\\
\vec{S}\rightarrow\vec{S}-\vec{e}_i,\quad\text{with transition rate $d_i(S_i, a_i)$},
\end{cases}
\end{align}
where $\vec{e}_i$ is a $N$-dimensional vector with all zero elements except the $i$-th entry being equal to $1$, while $b_i(\cdot,\cdot),d_i(\cdot,\cdot)$ are the birth and death rates. These rates are as follows:
\begin{align}
b_i(S_i, a_i)&=\lambda_i,\quad
d_i(S_i, a_i)=\mu_i(S_i) a_i,
\end{align}
where the function $\mu_i$ satisfies $\mu_i(0)=0$. As is clear from the definitions above, we allow the death rates to be a function of the state. This allows us to model much more realistic scenarios. In this paper, we assume the departure rate  as the classic $M/M/k$ queue for simplicity, i.e., {$d_i(S_i, a_i)=\mu_i S_ia_i.$} This means that an adequate amount of bandwidth is available for connecting the edge to customers, so that in comparison with the capacity constraint of $K$ units at the edge server, the last-hop transmission (edge to customers) is not the bottleneck.

We define
\begin{align}\label{eq:cost}
C_i(S_i,a_i) := S_i/\lambda_i,
\end{align}
to be the instantaneous cost incurred when state of service $i$ is $S_i$, and action $a_i$ is applied to it.  As is shown below, by Little's Law, the average latency is equivalent to the average total number of outstanding requests over the request arrival rate in the system.  Hence the cost represents the average latency for obtaining the service from the edge server for customers.

Let $\mathcal{F}_t$ denote the sigma-algebra~\cite{shiryaev2007optimal} generated by the random variables $\left\{(\vec{S}(\ell),\vec{A}(\ell) ):0\le \ell < t\right\}$. A policy $\pi$, is a collection of maps $\mathcal{F}_t\mapsto \vec{A}(t), t=1,2,\ldots$ that yields a feasible allocation $\vec{A}(t),\sum_i A_i(t)\le K$ based on the past history of system operation. The decisions $\vec{A}(t)$ determine which services should be placed on the edge, i.e., which bandits should be made active.

The expected average service delivery latency, or the average cost by Little's Law 
to customers under policy $\pi$ is then given as follows,
\begin{align}\label{eq:average-latency}
\bar{C}_\pi :=\limsup_{T\rightarrow\infty}\sum_{i=1}^N\frac{1}{T}\mathbb{E}_{\pi}\left(\int_{0}^T C_i(S_i(t))dt\right),
\end{align}
where $\mathbb{E}_{\pi}$ denotes that the expectation is taken with respect to the measure induced by policy $\pi$. Our goal is then to derive a policy $\pi$ to minimize the average delivery latency under an edge server capacity constraint, which can be formulated as the following MDP:
\begin{align}\label{eq:obj-continous} 
\min_{\pi\in\Pi} \quad&\bar{C}_{\pi}, \quad 
\text{s.t.} \quad\sum_{i=1}^N A_i(t)\leq K, \quad\forall t.
\end{align}
The following result states the existence of a value function and average cost value for the MDP~\eqref{eq:obj-continous} (Theorem 6.3.1, Section 6.3.2 \cite{puterman2014markov}).
\begin{lemma}
It is known that there exists $f$ and $V(\cdot)$ that satisfy the Dynamic Programming equation
\begin{align}\label{eq:obj-continous-DP}
f&=\min_{{\sum_i a_i\leq K}}\Bigg( \sum_{i=1}^N\Bigg[ C_i(S_i, a_i)+b_i(S_i, a_i) V(\vec{S}+e_i)+ \nonumber\displaybreak[0]\\
&d_i(S_i, a_i) V(\vec{S}-e_i)\!-\!(b_i(S_i, a_i)\!+\!d_i(S_i, a_i) ) V(\vec{S})   \Bigg]\!\Bigg),
\end{align}
where a stationary policy that realizes the minimum in~(\ref{eq:obj-continous-DP}) is optimal with $f=\min_{\pi}C_{\pi}$ and $V(\vec{S})$ being the value function  \cite{puterman2014markov}. An optimal policy ~(\ref{eq:obj-continous-DP}) can be obtained numerically by using value iteration or policy improvement algorithms.
\end{lemma}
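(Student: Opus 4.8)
The statement is an instance of the average-cost optimality theory for countable-state Markov decision processes (Theorem 6.3.1 of \cite{puterman2014markov}), so the real task is to verify that the hypotheses of that theorem hold for the service-placement chain of \eqref{eq:transition}. The plan is to (i) reduce the continuous-time controlled chain to a discrete-time MDP, (ii) establish a Lyapunov/drift condition together with the existence of a stationary policy of finite average cost, (iii) run a vanishing-discount argument to produce a pair $(f,V)$ solving \eqref{eq:obj-continous-DP}, and (iv) give a verification argument showing that the stationary minimizer in \eqref{eq:obj-continous-DP} is optimal with $f=\min_\pi\bar C_\pi$, after which convergence of value iteration and of policy improvement follows from the corresponding standard results.

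First I would pass from the continuous-time chain to a discrete-time MDP. Since $\vec A(t)$ is held constant between transition epochs and all holding times are exponential, the embedded jump chain is a discrete-time MDP with transition and cost data rescaled by the total exit rate; equivalently one uniformizes after first truncating the state space at a level $L$, which is needed because the death rate $\mu_i S_i a_i$ is unbounded in $S_i$ and this is the only reason naive uniformization fails. On the truncated chain, existence of $(f^{(L)},V^{(L)})$ and the finite-step convergence of policy iteration are immediate; the genuine work is to pass $L\to\infty$.

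The key step is the stability input, and this is where I expect the main obstacle to lie, because the capacity constraint $\sum_i A_i(t)\le K$ couples the $N$ queues and so neither the Lyapunov function nor the comparison policy can be built service by service. I would exhibit one stationary policy $\pi_0$ — for instance a round-robin / longest-queue-first rule keeping each service active a fraction $K/N$ of the time — and argue that under $\pi_0$ each queue is stochastically dominated by an $M/M/\infty$-type queue with server vacations, hence positive recurrent with $\bE_{\pi_0}\!\left[\sum_i S_i\right]<\infty$ regardless of the relative magnitudes of $\lambda_i$ and $\mu_i$ (the state-dependent service rate $\mu_i S_i$ drains an active queue geometrically fast); since $C_i(S_i,a_i)=S_i/\lambda_i$, this gives $\bar C_{\pi_0}<\infty$. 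In parallel I would construct a Lyapunov function, e.g. $\phi(\vec S)=\sum_i w_i S_i^2$ with weights $w_i$ chosen so that the generator $\mathcal{A}$ obeys a drift inequality $(\mathcal{A}\phi)(\vec S)\le -\epsilon\sum_i S_i + c$ uniformly over feasible actions that serve near-maximal queues; $\phi$ and $\pi_0$ must be chosen jointly for the $N$-dimensional chain.

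Given these ingredients I would run the vanishing-discount argument: for each discount rate $\alpha>0$ the discounted problem has a bounded optimal value function $V_\alpha$ (the discount supplies the "killing" that tames the unbounded rates), and the drift inequality yields $|V_\alpha(\vec S)-V_\alpha(\vec 0)|\le \kappa(1+\sum_i S_i^2)$ uniformly in $\alpha$. Hence the normalized differences $h_\alpha(\cdot)=V_\alpha(\cdot)-V_\alpha(\vec 0)$ and the scalars $\alpha V_\alpha(\vec 0)$ lie in a precompact set; along a subsequence $\alpha\downarrow 0$ they converge to some $(V,f)$, and passing to the limit in the discounted DP equation gives \eqref{eq:obj-continous-DP}. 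Finally, for an arbitrary policy $\pi$, telescoping the one-step form of \eqref{eq:obj-continous-DP} along the trajectory, taking $\bE_\pi$, dividing by $T$, and using $\frac1T\bE_\pi V(\vec S(T))\to 0$ (again from the $\phi$-bound) yields $\bar C_\pi\ge f$, with equality when $\pi$ attains the minimum in \eqref{eq:obj-continous-DP}; convergence of value iteration in span seminorm and of policy improvement then follow from the standard average-cost results once the communicating-plus-drift conditions above are in hand.
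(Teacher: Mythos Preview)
The paper does not actually prove this lemma: it is introduced by the parenthetical ``(Theorem 6.3.1, Section 6.3.2 \cite{puterman2014markov})'' and the lemma text itself begins ``It is known that\ldots'', so the authors treat the existence of $(f,V)$ as a black-box citation and give no argument. Your proposal therefore goes well beyond what the paper does.

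That said, your plan is the correct one if the result is to be justified rigorously for this particular model, and in fact the bare citation of Puterman's Theorem 6.3.1 is somewhat glib: that theorem is stated for finite state-action models, whereas here the state space is $\mathbb{N}^N$, the cost $C_i(S_i,a_i)=S_i/\lambda_i$ is unbounded, and the transition rates $d_i=\mu_i S_i a_i$ are unbounded, so none of the standard hypotheses apply directly. Your identification of the two genuine obstacles---unbounded rates defeating naive uniformization, and the capacity constraint $\sum_i a_i\le K$ coupling the $N$ queues so that neither the Lyapunov function nor the reference policy can be built coordinatewise---is accurate, and the remedy you propose (a longest-queue-first or round-robin reference policy, a quadratic Lyapunov function giving a Foster--Lyapunov drift, truncation plus vanishing discount, and a telescoping verification step) is the standard and correct route for countable-state average-cost MDPs with unbounded data. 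Stability of the reference policy indeed holds for all $\lambda_i,\mu_i>0$ because the $M/M/\infty$-type departure rate $\mu_i S_i$ eventually dominates any fixed arrival rate even when the server is present only a positive fraction of the time. In short: the paper defers entirely to the literature, while your approach supplies the missing verification; what the paper buys is brevity, what your approach buys is an actual proof.
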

Though in principle one could use the iterative algorithms mentioned above, in reality the ``curse of dimensionality", i.e. exponential growth in the size of the state-space with the number of services $N$, renders such a solution impractical. Thus, we resort to using the Whittle index policy that is computationally appealing.

\section{Service Placement using Whittle Index Policy} \label{sec:lagrangian-relaxation}

In general, even if the state-space were bounded (say for example by truncating the queue lengths), the MDP~(\ref{eq:obj-continous}) is a hard problem to solve because the optimal decisions for $N$ services are strongly coupled. We realize that problem~(\ref{eq:obj-continous}) can be posed as a RMAB problem in which the queue length $S_i(t)$ is the state of bandit $i$. A tractable solution to RMAB, with a computational complexity that scales linearly with $N$, is the Whittle index policy~\cite{whittle1988restless}. We briefly describe the notion of indexability and the Whittle index policy and then show that our problem is indexable.  For the ease of exposition, the proofs of main results in this section are relegated to Appendix~\ref{sec:app-non-learning}.

\subsection{Whittle Index Policy}
Consider the following problem, which arises from~\eqref{eq:obj-continous} by relaxing its constraint to time-average:
\begin{align}\label{eq:capacity-continous-average}
\min_{\pi\in\Pi}&~\bar{C}_\pi,  \quad
\text{ s.t.}\limsup_{T\rightarrow\infty} \frac{1}{T}\mathbb{E}_{\pi}\left(\int_{0}^T\sum_{i=1}^N A_i(t)dt\right)\leq K.
\end{align}
It can be shown that the relaxed problem~(\ref{eq:capacity-continous-average}) is computationally tractable, has a computational complexity that scales linearly with $N$, and has a decentralized solution in which the decisions for each service are made only on the basis of the state of that service. Consider the Lagrangian associated with this problem,
\begin{align}\label{eq:obj-continous-relaxed}
\cL(\pi,W) :=
\limsup_{T\rightarrow\infty} &\frac{1}{T}\mathbb{E}_{\pi}\Bigg\{\int_0^T\bigg(  \sum_{i=1}^N C_i(S_i(t))  \nonumber\displaybreak[0]\\
& \qquad {-W}\bigg[ K- \sum_{i=1}^N A_i(t) \bigg] \bigg)  dt      \Bigg\},
\end{align}
where $W$ is the Lagrangian multiplier, and $\pi$ is the allocation policy. Also define the associated dual problem,
\begin{align}
D(W) := \min_{\pi\in \Pi}  \cL(\pi,W).
\end{align}
Since the Lagrangian decouples into the sum of $N$ individual service MDPs, it turns out that in order to evaluate the dual function at $W$, it suffices to solve $N$ single-service MDPs of the following form ($\pi_i$ is a policy for service $i$):
\begin{align}\label{def:single_mdp}
\min_{\pi_i} \quad& \bar{C}_{\pi_i, i},
\end{align}
where
\begin{align}\label{def:single_mdp1}
\bar{C}_{\pi_i, i}: &=\limsup_{T\rightarrow\infty} \frac{1}{T}\mathbb{E}_{\pi_i}\Bigg\{\int_0^T\bigg( C_i(S_i(t))   \nonumber\\
&\qquad\qquad\qquad\qquad\qquad {-W}(1-A_i(t))           \bigg)  dt      \Bigg\}.
\end{align}

\begin{definition}(Indexability):
Consider the single-service MDP~\eqref{def:single_mdp} for service $i$. Let $D_i(W)$ denote the set of those states $s$ for which the optimal action is to choose action $a=0$ (passive). Then, the $i$-th MDP is indexable if the set $D_i(W)$ increases with $W$, i.e., if $W> W^\prime$ then $D_i(W)\supseteq  D_i(W^{\prime})$. The original MDP~\eqref{eq:obj-continous} is indexable if each of the $N$ single-service MDPs are indexable.
\end{definition}

\begin{definition}(Whittle Index)
If the single-service MDP for service $i$ is indexable, then the Whittle index in state $s$ is denoted $W_i(s)$, and is given as follows:
\begin{align*}
W_i(s):=\inf_{W\ge 0}\{ s \in D_i(W) \}.
\end{align*}
Thus, it is the smallest value of the parameter $W$ such that the optimal policy for service $i$ is indifferent towards $a=0$ and $a=1$ when the state is equal to $s$.
\end{definition}

\begin{definition}(Whittle index rule)
At each time $t$, the Whittle index rule prioritizes the bandits (services) in the decreasing order of their Whittle indices $W_i(S_i(t))$. It then activates $K$ bandits (places $K$ services) that have the highest priority.
\end{definition}
The Whittle index policy is in general not an optimal solution to the original problem~\eqref{eq:obj-continous}.  However, it has been proved that Whittle index policy is asymptotically optimal \cite{weber1990index,verloop2016asymptotically,ouyang2012asymptotically} as the number of bandits is scaled up while keeping constant their relative population sizes and the
number of bandits that can be activated simultaneously.

\subsection{MDP~\eqref{eq:obj-continous} is indexable}
Our proof of indexability relies on the ``threshold'' property of the optimal policy for the single-service MDP, i.e., the service is placed on the edge server only when the number of requests for it at the edge is above a certain threshold.
\begin{proposition}\label{prop:threshold-policy}
Fix a $W\ge 0$, and consider the single-service MDP~\eqref{def:single_mdp}.  The optimal policy for this problem is of threshold type. The threshold depends upon $W$.
\end{proposition}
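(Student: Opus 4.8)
The plan is to work from the Bellman optimality equation for the single-service MDP, isolate the exact test that decides whether the active action is optimal in a given state, and then show that this test is ``monotone'' in the state because the relative value function is convex and non-decreasing; the threshold property is then immediate.

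First I would uniformize the continuous-time single-service MDP in~\eqref{def:single_mdp}. The only delicate point is that the death rate $\mu_i S_i$ is unbounded, so I would first truncate the state space to $\{0,1,\dots,B\}$ (suppressing the birth transition at $B$), establish the claim for each finite $B$, and then let $B\to\infty$; alternatively one passes through the $\alpha$-discounted problem and uses a vanishing-discount argument. Either way one obtains a relative value function $V(\cdot)$ satisfying a Bellman equation of the form
\begin{align*}
f = \min_{a\in\{0,1\}}\Big( \tfrac{S_i}{\lambda_i} - W(1-a) + \lambda_i\big(V(S_i{+}1)-V(S_i)\big) + \mu_i S_i\,a\,\big(V(S_i{-}1)-V(S_i)\big)\Big).
\end{align*}
Comparing the value of $a=1$ against that of $a=0$ in this expression shows that the active action is (weakly) optimal in state $S_i$ if and only if
\begin{align*}
\mu_i S_i\big(V(S_i)-V(S_i{-}1)\big)\;\ge\;W .
\end{align*}
Hence the optimal active set is $\{\,s:\mu_i s\,(V(s)-V(s-1))\ge W\,\}$, and the entire proposition reduces to showing that this set is up-closed, i.e. that $s\mapsto \mu_i s\,(V(s)-V(s-1))$ is non-decreasing.

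The crux is therefore to establish that $V$ is convex and non-decreasing: then $V(s)-V(s-1)$ is non-negative and non-decreasing, and multiplying by the non-decreasing factor $\mu_i s$ preserves monotonicity, so the active set has the form $\{S_i\ge n_i(W)\}$ for some threshold $n_i(W)\in\{0,1,\dots\}\cup\{\infty\}$, which depends on $W$ through the test above. I would prove the structural property by induction along the value-iteration recursion $V_{n+1}=\mathcal{T}V_n$ started from $V_0\equiv 0$, showing that $\mathcal{T}$ preserves the relevant cone. Monotonicity is straightforward (the stage cost $S_i/\lambda_i$ is non-decreasing and the transition structure is stochastically monotone). The real work is convexity: the stage cost is affine hence convex and the birth term $\lambda_i V(S_i{+}1)$ is convex by shift-invariance, but the action-minimization step does \emph{not} preserve convexity in general, so one must exploit the birth-death structure — in particular that the two action-values differ only through the term $\mu_i S_i\big(V(S_i{-}1)-V(S_i)\big)$ and cross at most once in $S_i$ — and quite possibly propagate an \emph{augmented} invariant (e.g. convexity of $V$ together with monotonicity of $s\mapsto s\,(V(s)-V(s-1))$) so that the induction closes. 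Passing to the limit $n\to\infty$ (and, if truncation was used, $B\to\infty$) transfers the property to the relative value function of the average-cost problem.

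The step I expect to be the main obstacle is precisely this convexity-preservation argument in the presence of the state-dependent, unbounded death rate $\mu_i S_i$: choosing an invariant strong enough to survive the pointwise minimum over actions, yet still implying the desired monotonicity of $s\mapsto\mu_i s\,(V(s)-V(s-1))$, and simultaneously controlling the truncation (or vanishing-discount) limit. A useful sanity check is that for a state-independent death rate the test becomes $\mu_i\big(V(s)-V(s-1)\big)\ge W$ and plain convexity of $V$ already suffices; the extra care here is entirely due to the $\mu_i S_i$ factor. For well-definedness of the limits one also notes that any policy under which the queue is not positive recurrent incurs infinite average cost, so the optimal policy keeps the chain stable and the truncation is asymptotically harmless.
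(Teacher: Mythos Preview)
Your approach via convexity of the relative value function is sound in principle, but it is a genuinely different route from the paper's and considerably more laborious than necessary for this particular model. The paper exploits a structural feature you did not use: when the action is passive the death rate is \emph{exactly zero}, $d_i(S_i,0)=0$. Given any optimal stationary policy $\pi^\star$, let $R^\star$ be the largest state at which $\pi^\star$ is passive. Since the chain cannot move downward from $R^\star$, every state below $R^\star$ is transient and carries zero stationary mass; consequently the average cost of $\pi^\star$ depends only on what $\pi^\star$ does on $\{R^\star,R^\star+1,\dots\}$, where by construction it coincides with the threshold-$R^\star$ policy. Hence the threshold policy achieves the same average cost as $\pi^\star$ and is optimal. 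No Bellman equation, no convexity, no truncation argument is needed.

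The trade-off is clear: your value-iteration/convexity argument is the standard tool and would survive generalizations (e.g.\ a nonzero but smaller passive service rate), whereas the paper's one-paragraph argument is essentially free but hinges entirely on $d_i(\cdot,0)\equiv 0$. For the model as stated you are working much harder than required, and the part you yourself flag as the main obstacle---propagating an augmented invariant to get monotonicity of $s\mapsto \mu_i s\,(V(s)-V(s-1))$ through the minimum and through the truncation limit---is precisely the work the paper avoids.
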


We now compute the stationary distribution of a threshold policy as a function of its threshold. This result is useful while proving indexability of the original MDP.
\begin{proposition}\label{prop:stationary-distributions}
The stationary distribution of the threshold policy $\pi=R$ satisfies
\begin{align}
&q_i^R(R^\prime)=0,\nonumber\\
&q_i^R(R)=1/\left(1+\sum\limits_{j=1}^{\infty}\left(\frac{\lambda_i}{\mu_i}\right)^j\frac{1}{\Pi_{k=1}^j(R+k)}\right)\\ \nonumber
&q_i^R(R+l)=\left(\frac{\lambda_i}{\mu_i}\right)^l\frac{1}{\Pi_{k=1}^l(R+k)}q_i^R(R), l=1,2,\ldots,
\end{align}
 where $R^\prime$ is a dummy states representing state $0$ to $R$.  For the notation abuse, we use a superscript $R$ to denote the stationary distribution under a particular policy $R.$
\end{proposition}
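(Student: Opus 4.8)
The plan is to observe that once the threshold policy $\pi=R$ is fixed, the queue‑length process $S_i(t)$ is an autonomous birth–death continuous‑time Markov chain, so the proposition amounts to reading off its stationary distribution. Under $\pi=R$ the service is passive for $S_i\le R$ and active for $S_i\ge R+1$, hence (using $d_i(s,0)=\mu_i s\cdot 0=0$) the chain has constant birth rate $\lambda_i$ in every state and death rate $\mu_i s$ in state $s$ when $s\ge R+1$ and $0$ otherwise. Since the birth rates are uniformly bounded the chain is non‑explosive, so its stationary distribution is pinned down by the global balance equations together with normalisation.

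The first step is to show that the sub‑threshold states carry no stationary mass. On each state $s\le R$ the only available transition is the birth $s\to s+1$, so the set $\{R,R+1,\ldots\}$ is closed: no transition ever leaves it. Therefore, starting from any $s\le R-1$ the chain reaches $R$ in finitely many jumps and never returns, so every such state is transient and contributes $0$ to the stationary distribution. Collecting $\{0,1,\ldots,R-1\}$ into the dummy state $R'$ gives $q_i^R(R')=0$.

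Next, on the recurrent class $\{R,R+1,\ldots\}$ the chain is an irreducible birth–death CTMC, so the level‑crossing (cut‑balance) identity yields, for every $l\ge 0$,
\[
\lambda_i\, q_i^R(R+l)=\mu_i\,(R+l+1)\, q_i^R(R+l+1),
\]
the boundary cut between $R-1$ and $R$ being balanced trivially since $q_i^R(R-1)=0$ and state $R$ has zero death rate. Iterating this recursion from the base state $R$ gives $q_i^R(R+l)=(\lambda_i/\mu_i)^l\big(\prod_{k=1}^l(R+k)\big)^{-1}q_i^R(R)$ for $l\ge 1$. Summing over the support (the dummy state contributes nothing) then forces $q_i^R(R)\big(1+\sum_{j\ge 1}(\lambda_i/\mu_i)^j\prod_{k=1}^j(R+k)^{-1}\big)=1$, which is exactly the stated expression for $q_i^R(R)$; the normalising series converges because $\prod_{k=1}^j(R+k)\ge j!$, so its sum is at most $e^{\lambda_i/\mu_i}$, and in particular the chain is positive recurrent for every value of $\lambda_i/\mu_i$ (the state‑dependent, $M/M/\infty$‑type service rate removes the usual stability requirement).

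I do not expect any genuinely delicate analytic step here: this is the standard stationary‑distribution computation for a birth–death chain. The only point needing care is the correct identification of the recurrent class and its boundary --- verifying that the passive action nulling the death rate makes precisely the states strictly below the threshold transient (so that $R$, not $R-1$, is the smallest state in the support) and that the balance equation at the boundary state $R$ is consistent with the interior recursion. Once this bookkeeping is settled, writing down the recursion, solving it, and normalising are routine.
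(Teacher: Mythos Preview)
Your proposal is correct and follows essentially the same approach as the paper: identify the threshold-controlled queue as a birth--death chain, observe that states strictly below $R$ are transient (pure births), write the detailed-balance recursion on the recurrent class $\{R,R+1,\ldots\}$, and normalise. The paper's own proof is terser---it simply asserts the recursion ``based on standard birth-and-death process'' without writing the cut-balance identity or checking convergence of the normalising series---so your version is, if anything, more complete; note also that the proposition's phrase ``$R'$ \ldots representing state $0$ to $R$'' is a typo for $0$ to $R-1$, and your reading is the right one.
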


We conclude this subsection by show that all bandits are indexable under our model.

\begin{proposition}\label{prop:indexable}
The MDP~\eqref{eq:obj-continous} is indexable.
\end{proposition}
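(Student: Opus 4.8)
The plan is to leverage Proposition~\ref{prop:threshold-policy} to reduce the single-service subsidy problem~\eqref{def:single_mdp} to an optimization over threshold policies, use the closed-form stationary law of Proposition~\ref{prop:stationary-distributions} to write the average cost of a threshold policy explicitly, and then show that the optimal threshold is monotone non-decreasing in the subsidy $W$. Indexability of each single-service MDP, and hence of~\eqref{eq:obj-continous}, will then follow at once from the definition.

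Fix a service $i$. By Proposition~\ref{prop:threshold-policy} it suffices, for each $W\ge 0$, to optimize over threshold policies; let policy ``$R$'' be passive in states $\{0,1,\dots,R\}$ and active in states $\{R+1,R+2,\dots\}$, so its passive set is $\{0,\dots,R\}$. By Proposition~\ref{prop:stationary-distributions} the induced chain is positive recurrent on $\{R,R+1,\dots\}$, with $R$ the only \emph{recurrent} passive state (the passive states $0,\dots,R-1$ being transient, i.e.\ carrying zero stationary mass), so the objective in~\eqref{def:single_mdp1} under policy $R$ evaluates to
\begin{equation}\label{eq:gRW-plan}
g_i(R,W)\;=\;\underbrace{\sum_{l\ge 0}q_i^R(R+l)\,\frac{R+l}{\lambda_i}}_{=:\,h_i(R)}\;-\;W\,q_i^R(R).
\end{equation}
Here $h_i(R)<\infty$ for every $R$ since $q_i^R(R+l)$ decays faster than geometrically in $l$ (using $\prod_{k=1}^{l}(R+k)\ge l!$), while the ``always passive'' policy has $+\infty$ average cost; hence $R_i^\star(W):=\arg\min_{R\in\mathbb{N}_0}g_i(R,W)$ is attained and finite for every finite $W$, and the passive set of the optimal policy is the initial segment $D_i(W)=\{0,1,\dots,R_i^\star(W)\}$.

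Next I would establish the single-crossing property that forces monotonicity of $R_i^\star(\cdot)$. From Proposition~\ref{prop:stationary-distributions},
\begin{equation*}
q_i^R(R)=\Big(1+\sum_{j\ge 1}(\lambda_i/\mu_i)^j\big/\textstyle\prod_{k=1}^{j}(R+k)\Big)^{-1},
\end{equation*}
and every summand in the denominator is strictly decreasing in $R$, so $q_i^R(R)$ is strictly increasing in $R$. Consequently, for any $R<R'$,
\begin{equation*}
g_i(R',W)-g_i(R,W)=\big(h_i(R')-h_i(R)\big)-W\big(q_i^{R'}(R')-q_i^{R}(R)\big)
\end{equation*}
is strictly decreasing in $W$, because $q_i^{R'}(R')-q_i^{R}(R)>0$. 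Thus whenever a larger threshold $R'$ is weakly optimal at some $W$, it is strictly preferred to every smaller threshold at each $W'>W$; a standard monotone comparative-statics (single-crossing) argument then yields that $R_i^\star(W)$ is non-decreasing in $W$. Hence $W\le W'$ implies $D_i(W)=\{0,\dots,R_i^\star(W)\}\subseteq\{0,\dots,R_i^\star(W')\}=D_i(W')$, so each single-service MDP is indexable, and therefore so is~\eqref{eq:obj-continous}.

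The main obstacle is not the monotonicity itself but the careful accounting around degeneracies: verifying that only the single recurrent passive state $R$ contributes to the subsidy term in~\eqref{eq:gRW-plan}; ruling out the degenerate ``always passive'' threshold by its divergent cost so that $R_i^\star(W)$ is genuinely finite and the passive set is exactly an initial segment; and handling possible ties in $\arg\min_R g_i(R,W)$ (resolved, as usual, by taking the largest minimizer). The remaining ingredients — finiteness of $h_i(R)$ and strict monotonicity of $q_i^R(R)$ in $R$ — are the elementary estimates indicated above.
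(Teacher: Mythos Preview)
Your proposal is correct and follows essentially the same route as the paper: reduce to threshold policies via Proposition~\ref{prop:threshold-policy}, express the per-threshold cost as an affine function of $W$ with slope $-q_i^R(R)$, prove $q_i^R(R)$ is strictly increasing in $R$ using the explicit formula of Proposition~\ref{prop:stationary-distributions}, and conclude monotonicity of the optimal threshold (hence of $D_i(W)$) in $W$. Your treatment is in fact more careful than the paper's about the edge cases (finiteness of $h_i(R)$, ruling out the always-passive policy, tie-breaking), but the core argument is the same single-crossing idea.
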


We are now ready to derive the Whittle indices for the MDP~\eqref{eq:obj-continous}.
\begin{proposition}\label{prop:whittle-index-closed}
The Whittle index is given by
\begin{align}\label{eq:whittle-index1}
W_i(R)=\frac{\mathbb{E}_{R+1}[C_i(S_i)] - \mathbb{E}_{R}[C_i(S_i)] }{\sum_{S_i=0}^{R+1} q_{i}^{R+1}(S_i)-\sum_{S_i=0}^{R} q_{i}^R(S_i) },
\end{align}
when the right hand side of~(\ref{eq:whittle-index1}) is non-decreasing in $R$.
\end{proposition}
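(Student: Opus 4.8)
The plan is to use the threshold structure from Proposition~\ref{prop:threshold-policy} to reduce the single-service Lagrangian MDP \eqref{def:single_mdp} to the one-parameter family of threshold policies $\{\pi=R\}_{R\ge 0}$, to compute the long-run average cost of each threshold policy in closed form from the stationary distribution of Proposition~\ref{prop:stationary-distributions}, and then to read off the Whittle index as the value of the subsidy $W$ at which two adjacent threshold policies are equally good. Throughout, fix service $i$ and subsidy $W\ge 0$, and write $\pi=R$ for the policy that is passive on $\{0,\dots,R\}$ and active on $\{R+1,R+2,\dots\}$, with stationary law $q_i^R$ as in Proposition~\ref{prop:stationary-distributions}.

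First, by Proposition~\ref{prop:threshold-policy} an optimal policy for \eqref{def:single_mdp} equals $\pi=R$ for some $R\in\{0,1,2,\dots\}$, so the optimal value of \eqref{def:single_mdp} is $\min_{R\ge 0}\Phi_i(R,W)$, where $\Phi_i(R,W)$ denotes the long-run average cost of $\pi=R$ under subsidy $W$. Under $\pi=R$ the queue is an ergodic birth--death chain whose stationary tail $q_i^R(R+l)$ decays faster than geometrically in $l$, so $q_i^R$ has finite moments, and the ergodic theorem for continuous-time Markov chains yields the closed form
\begin{align*}
\Phi_i(R,W)=\mathbb{E}_{R}\!\big[C_i(S_i)\big]-W\sum_{S_i=0}^{R}q_i^{R}(S_i),
\end{align*}
where $\mathbb{E}_{R}[\cdot]$ is the expectation under $q_i^R$ and $\sum_{S_i=0}^{R}q_i^{R}(S_i)$ is the stationary fraction of time the bandit is passive. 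Hence $W\mapsto\Phi_i(R,W)$ is affine, with intercept $\mathbb{E}_{R}[C_i(S_i)]$ and slope $-\sum_{S_i=0}^{R}q_i^{R}(S_i)$; from Proposition~\ref{prop:stationary-distributions} one checks that increasing $R$ strictly increases the intercept (a higher threshold leaves more requests waiting) and strictly decreases the slope (a higher threshold raises the passive fraction).

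The optimal policy is indifferent between $a=0$ and $a=1$ in the state in which $\pi=R$ and $\pi=R+1$ prescribe different actions precisely when both of these policies attain the minimum in the previous step, i.e.\ when $\Phi_i(R,W)=\Phi_i(R+1,W)$. Solving this single linear equation for $W$ gives precisely \eqref{eq:whittle-index1}; its numerator is positive by the intercept monotonicity and its denominator is positive by the slope monotonicity, so the value is a legitimate nonnegative index. To confirm that this crossing value is genuinely the Whittle index, note that since the affine maps $\Phi_i(R,\cdot)$ have strictly decreasing slopes, the lower envelope $W\mapsto\min_R\Phi_i(R,W)$ is concave and piecewise linear, with a non-decreasing minimizing threshold $R^\star(W)$. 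If the right-hand side of \eqref{eq:whittle-index1} is non-decreasing in $R$ (equivalently, indexability, Proposition~\ref{prop:indexable}), then the successive crossing values $W_i(0)\le W_i(1)\le\cdots$ are ordered, so $R^\star(W)$ increases by exactly one as $W$ passes each $W_i(R)$, the passive set $D_i(W)=\{0,\dots,R^\star(W)\}$ gains exactly one state there, and the infimum of the $W$ for which that state lies in $D_i(W)$ equals $W_i(R)$, which is the definition of the Whittle index in that state.

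The main obstacle is this last identification step: one must rule out that the optimal threshold $R^\star(W)$ skips values as $W$ grows, and this is precisely where the ``non-decreasing in $R$'' hypothesis (i.e.\ indexability) enters, together with the two monotonicity claims about the intercepts and slopes of $\Phi_i(R,\cdot)$. By comparison, the reduction to threshold policies and the linear solve are routine; the only other technical detail is the interchange of limit and expectation in the average-cost identity, which is justified by the super-geometric tail of $q_i^R$.
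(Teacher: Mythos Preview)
Your proposal is correct and takes essentially the same approach as the paper: reduce to threshold policies via Proposition~\ref{prop:threshold-policy}, equate the average costs $\Phi_i(R,W)=\Phi_i(R+1,W)$ of consecutive thresholds, and solve the resulting linear equation for $W$ using the stationary law of Proposition~\ref{prop:stationary-distributions}. Your write-up is more detailed than the paper's terse argument (you spell out the affine structure of $\Phi_i(R,\cdot)$, the monotonicity of intercepts and slopes, and the role of the non-decreasing hypothesis in identifying the crossing value with the Whittle index), but the underlying idea is identical.
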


\begin{remark}
If~(\ref{eq:whittle-index1}) is non-monotone in $R$, then the Whittle index cannot be derived by equating the average cost of two consecutive threshold policies.  Instead, the algorithm described in \cite{glazebrook2009index,larranaga2015asymptotically} provides a means to obtain Whittle index.  In particular, the threshold policy $-1$ is compared to an appropriate threshold policy $R_0>-1$, we obtain $W(R_0).$ Whittle index for all thresholds $r\in\{0, 1,\cdots, R_0\}$ equals $W(R_0).$  The right choice of $R_0$ is the result of an optimization problem.  Similarly, the index in state $R_0+1$ is computed by comparing threshold policy $R_0$ to an appropriate threshold policy $R_1>R_0$. Then the Whittle index is $W(R_1)$ for all $r\in\{R_0+1, \cdots, R_1\}$.  The algorithm terminates when $R_i=\infty$ in one iteration $i.$
 \end{remark}

 \begin{remark}
 Since the cost function and the stationary probabilities are known, ~(\ref{eq:whittle-index1}) can be numerically computed.
From~(\ref{eq:whittle-index1}), it is clear that the index of bandit $i$ does not depend on the number of requests to other services $j$, $j\neq i.$  Therefore, it provides a systematic way to derive simple policies that can be easily implementable.
\end{remark}

\section{Reinforcement Learning Based Whittle Index Policy}\label{sec:learning}

The computation of Whittle indices requires us to know the arrival rates $\lambda_i$ and service rates $\mu_i$.  Since these quantities are typically unknown and possibly time-varying, the assumption that these are known to the service provider, is not practical.  Hence, we now develop learning augmented algorithms that can ensure almost the same performance as the Whittle index policy.  In particular, we propose two RL algorithms that can fully exploit the structure of our proposed Whittle index policy.  
For the ease of exposition, we present our algorithms and main results in this section, and relegate the proofs to Section~\ref{sec:learning-proof}.

\subsection{UCB-Whittle}\label{sec:learning-ucb}
We first propose a UCB type algorithm based on the principle of optimism in the face of uncertainty~\cite{auer2002finite,lattimore2020bandit,mete2020reward}.
To simplify the exposition, we will sample our continuous-time system at those discrete time-instants at which the system state changes, i.e. either an arrival or a departure occurs. This gives rise to a discrete-time controlled Markov process~\cite{puterman2014markov}, and henceforth we will work exclusively with this discrete time system. We use $m\lmu:=1\slash \mu_{i}$ to denote the mean delivery time of service $i$, and $m\llm:=1\slash \lambda_{i}$ to denote the mean inter-arrival times for requests for service $i$. In what follows, we will parameterize the system in terms of the mean inter-arrival times and mean delivery times of all services. This is useful since the empirical estimates of mean inter-arrival times and mean delivery times are unbiased, while the empirical estimates of mean arrival rates and delivery rates are biased. Hence, this parametrization greatly simplifies the exposition and analysis. We denote $\te_{i,0} = (m\llm,m\lmu)$, and $\te_0=\left\{ \te_{i,0}\right\}_{i=1}^N$. Thus $\te_0$ denotes the vector comprising of the true system parameters.

Let $\vec{s},\vec{s}~^{\prime}$ be two possible system state values, and {$\vec{a}$} be an allocation for the combined system composed of $N$ services. $p_{\te}(\vec{s}~\up |\vec{s}, \vec{a})$ denotes one-step controlled transition probability for the combined system from state $\vec{s}$ to state $\vec{s}~\up$ under allocation $\vec{a}$ when the system parameter is $\te$.  Since Whittle indices $W_i(\cdot)$ are also a function of the parameter $\te_i$, we denote $W_{i,\te_i}(\cdot)$ in order to depict this dependence. For $\te = \{\te_i\}_{i=1}^{N}$, we use $W_{\te}:=  \left\{W_{i,\te_i}(\cdot)\right\}_{i=1}^N$, and by notational abuse we also use $W_{\te}$ to denote Whittle index policy that uses Whittle indices $\left\{W_{i,\te_i}(\cdot)\right\}_{i=1}^N$ for placing services.

The learning algorithm knows that $\te_0$ belongs to a finite set $\Theta$. For a stationary control policy $\pi$ that chooses $\vec{A}(t)$ on the basis of $\vec{S}(t)$, we let $\bar{C}(\pi;\te)$ denote its infinite horizon average expected cost (i.e., latency) when true system parameter is equal to $\te$, and $\bar{C}(\pi;\te,H)$ denote the cumulative cost incurred during $H$ time-steps. As is common in the RL literature~\cite{jaksch2010near,lattimore2020bandit,mete2020reward}, we define the ``gap'' $\Delta$ as follows,
\begin{align}\label{def:gap}
\Delta := \bar{C}(W_{\te_0};\te_0) - \max_{\te\in \Theta: \bar{C}(W_{\te};\te_0) \neq \bar{C}(W_{\te_0};\te_0) } \bar{C}(W_{\te};\te).
\end{align}
Throughout, we assume that we are equipped with a probability space $(\Omega,\mathcal{F},\bP)$~\cite{resnick2019probability}.

\subsubsection{Learning Setup}
We assume that the algorithm operates for $T$ time steps. We consider an ``episodic RL'' setup~\cite{sutton2018reinforcement} in which the total operating time horizon of $T$ steps is composed of multiple ``episodes''. Each episode is composed of $H$ consecutive time steps. Let $\tau_k:= (k-1)H, k=1,\ldots$ denote the starting time of the $k$-th episode, and $\cE_k$ the set of time-slots that comprise the $k$-th episode. We assume that {the system state is reset to $\vec{0}$ at the beginning of each episode}. This can be attained by discarding, at the end of each episode, those customers from our system which have not received service by the end of the current episode. Denote by $\mathcal{F}_t$ the sigma-algebra~\cite{shiryaev2007optimal} generated by the random variables $\left\{(\vec{S}(\ell),\vec{A}(\ell)):\ell=1,2,\ldots,t\right\}$. A learning rule $\phi$ is a collection of maps $\mathcal{F}_t\mapsto \mathcal{A}$ that utilizes the past observation history in order to make service placement decisions for the current time-step $t$, where $\mathcal{A}$ denotes the action space.

\subsubsection{Learning Rule}
Let $\hat{m}\llm(t),\hat{m}\lmu(t)$ denote the empirical estimates of the mean arrival times and {mean delivery time} of service $i$ at time $t$, i.e., 
\begin{align}
    \hat{m}\llm(t) \!\!:&\!\!=\!\! \frac{\text{Total time taken for arrivals for $i$ until $t$}}{ \text{Total requests until $t$}}, \label{est:1}\displaybreak[0]\\
     \hat{m}\lmu(t) \!\!:&\!\!=\!\!  \frac{\text{Total time spent delivering $i$ until $t$ }}{\text{ Number of deliveries for $i$ until $t$  }}.\label{est:2}
\end{align}
Denote $\hat{\te}_i(t):= (\hat{m}\llm(t),\hat{m}\lmu(t))$, and construct the confidence interval associated with $\hat{\te}_i(t)$ as follows,
\begin{align}\label{def:ci_i}
    \cO_i(t) &:=\{(m_1,m_2): |m_1 - \hat{m}\llm(t)|\le\epsilon\llm(t), \nonumber\displaybreak[0]\\
    &\qquad\qquad\qquad\qquad  | m_2 - \hat{m}\lmu(t) |\le \epsilon_{i,\mu}(t)\},
\end{align}
where $\epsilon\llm(t),\epsilon_{i,\mu}(t)$ denote the radii of the confidence intervals, and are given as follows,
\begin{align}\label{def:radius}
    \epsilon\llm(t):&= \sqrt{\frac{K_1}{N\llm(t)}\log\left(\frac{N\cdot T^{b}}{\delta} \right)},\displaybreak[0]\\
    \epsilon\lmu(t) :&= \sqrt{\frac{K_1}{N\lmu(t)}\log\left(\frac{N\cdot T^{b}}{\delta} \right)},
\end{align}
where $\delta>0$ is a user-specified parameter, $N\llm(t) (N\lmu(t))$ denotes the number of samples obtained for estimating the arrival rate (delivery rate) of service $i$ until time $t$, and $b,K_1>0$ are constants.  We will use $\delta = \frac{1}{T}$ in order to ensure that the learning algorithm has a good performance (see Theorem~\ref{th:ucb_regret}).

The confidence ball of the overall system is denoted as\footnote{Though we define confidence intervals and empirical estimates for each time, they are used only at  times $\tau_k$ that mark the beginning of a new episode. }
\begin{align}\label{def:ci}
    \cO(t):= \left\{ \te=\{\te_i\}_{i=1}^{N} \in \Theta: \te_{i} \in \cO_i(t), \forall i \right\}.
\end{align}
The learning rule then derives an ``optimistic estimate'' $\tilde{\te}_k$ of the true parameter $\te_0$ as follows,
\begin{align}\label{def:tilde_te}
    \tilde{\te}_k \in \arg\min_{\te\in \cO(\tau_k)} \bar{C}(W_{\te};\te).
\end{align}
In case $\cO(\tau_k)$ is empty, then $\tilde{\te}_k$ is chosen uniformly at random from the set $\Theta$. During $\cE_k$, the learning rule implements $W_{\tilde{\te}_{k}}$.  We summarize this as Algorithm~\ref{alg:ucb-whittle}.

\begin{algorithm}
\caption{UCB-Whittle}
\begin{algorithmic}
\State Initialize: $\hat{\lambda}_i(0)= .5,~\hat{\mu}_i(0) = .5, N_i(0)=0$ for all $i=1,2,\ldots,N$.
\For{Episodes $k=1,2,\ldots,$}
\State $k\to k+1$.
\State Update the estimate $\hat{\te}(\tau_k)$ by using ~\eqref{est:1}-\eqref{est:2} $\forall i$.
\State Calculate $\cO_i(t), \forall i$ according to~\eqref{def:ci_i}.
\State Calculate optimistic estimate $\tilde{\te}_k$ by solving~\eqref{def:tilde_te}.
\State Calculate Whittle index policy $W_{\tilde{\te}_k}$ and implement $W_{\tilde{\te}_k}$ in the current episode $\cE_k$.
\EndFor
\end{algorithmic}
\label{alg:ucb-whittle}
\end{algorithm}

\subsubsection{Learning Regret}
We define the regret $R(\phi,T)$ of the learning rule $\phi$ as follows,
\begin{align}\label{def:regret}
    R(\phi,T) :=\sum_{k=1}^{K}
    \left[ \sum_{t\in\cE_k} c(\vec{s}(t),\vec{u}(t))  -  \bar{C}(W_{\te_0};\te_0,H)\right],
\end{align}
where $\bar{C}(W_{\te_0};\te_0,H)$ is the expected value of the cost incurred during $H$ time-slots when $W_{\te_0}$ is used on the system that has parameter equal to $\te_0$. Our goal is to design a learning rule that minimizes the expected value of the regret $R(\phi,T)$. Thus, our benchmark is the Whittle index rule that is designed for the system with parameters $\te_0$. We next present the regret of UCB-Whittle algorithm.

\begin{theorem}\label{th:ucb_regret}
Consider the UCB-Whittle Algorithm applied to adaptively place services on the edge server so as to minimize the average latency faced by customers. We then have that with a probability greater than $1-\left(\delta + \epsilon + \frac{\delta}{T}\right)$, the regret $R(T)$ is bounded by $\frac{4}{\eta H \epsilon^2(\Delta)}\left[\frac{-\log \epsilon}{\ell b}\right]^{2}\log\left(\frac{N T^{b}}{\delta} \right)$, where $\epsilon>0$ is a user-specified parameter, $\ell b$ is a lower bound on the mean inter-arrival times and delivery times, $\Delta$ is the gap~\eqref{def:gap}, and the quantity $\epsilon(\Delta)$ is as in Lemma~\ref{lemma:exp_concentration}.  With $\epsilon = \frac{1}{T}, \delta = \frac{1}{T}$, we obtain that the expected regret $\bE\left( R(T)\right)$ of UCB-Whittle is upper-bounded as follows, \begin{align*}
\bE\left( R(T)\right) \le \frac{4}{\eta H \epsilon^2(\Delta)}  \frac{\log^{2}(T)\log\left(N T^{b+1}\right)}{\min_i \min \left\{ 1\slash \lambda_i, 1\slash \mu_i \right\}  }.
\end{align*}
\end{theorem}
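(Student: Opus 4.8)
The plan is to bound the regret one RL episode at a time. I would write $R(\phi,T)=\sum_{k=1}^{\lceil T/H\rceil} R_k$ with $R_k:=\sum_{t\in\cE_k}c(\vec s(t),\vec u(t))-\bar C(W_{\te_0};\te_0,H)$ as in~\eqref{def:regret}, condition on $\mathcal F_{\tau_k}$, and call episode $k$ \emph{clean} if the optimistic parameter $\tilde\te_k$ from~\eqref{def:tilde_te} makes the algorithm run the same placement rule as $W_{\te_0}$. The whole argument then rests on two facts: (i) on a clean episode the algorithm plays $W_{\te_0}$ from the reset state $\vec 0$ on the system with parameter $\te_0$, so $\bE[R_k\mid\mathcal F_{\tau_k},\text{clean}]$ vanishes by definition of the benchmark term $\bar C(W_{\te_0};\te_0,H)$; and (ii) at most $O(\mathrm{polylog}\,T)$ episodes fail to be clean.

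For the clean/non-clean dichotomy I would run the usual optimism argument. Call $\tau_k$ \emph{good} if $\te_0\in\cO(\tau_k)$; then feasibility of $\te_0$ in~\eqref{def:tilde_te} gives $\bar C(W_{\tilde\te_k};\tilde\te_k)\le\bar C(W_{\te_0};\te_0)$, and both $\tilde\te_k$ and $\te_0$ lying in $\cO(\tau_k)$ makes $\|\tilde\te_k-\te_0\|_\infty$ no larger than twice the worst confidence radius from~\eqref{def:radius}. Once every radius $\epsilon_{i,\lambda}(\tau_k),\epsilon_{i,\mu}(\tau_k)$ has fallen below $\epsilon(\Delta)/4$ we get $\|\tilde\te_k-\te_0\|_\infty<\epsilon(\Delta)$, and then the gap definition~\eqref{def:gap} together with Lemma~\ref{lemma:exp_concentration} (which quantifies, through $\epsilon(\Delta)$, how close a parameter must be to $\te_0$ before the Whittle rule it induces can behave suboptimally on $\te_0$) rules out $\tilde\te_k$ being suboptimal, so the episode is clean. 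Hence every non-clean episode is either not good, or good but with $N_{i,\lambda}(\tau_k)$ or $N_{i,\mu}(\tau_k)$ below the threshold $16K_1\epsilon(\Delta)^{-2}\log(NT^b/\delta)$ for some $i$.

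I would then bound these two families. For the event $\te_0\notin\cO(\tau_k)$: the empirical means $\hat m_{i,\lambda},\hat m_{i,\mu}$ are averages of i.i.d.\ exponential inter-arrival and delivery times, so a sub-exponential (Bernstein) tail bound shows each interval in~\eqref{def:ci_i} fails with probability $O(\delta/(NT^b))$; a union bound over the $N$ services and the at most $T$ episodes makes the probability that $\te_0$ ever leaves $\cO$ at most $O(\delta)$, the stray $\epsilon$ and $\delta/T$ terms in the theorem's confidence level coming from the analogous failure event internal to Lemma~\ref{lemma:exp_concentration} and from the tail truncation used below. For the radius-too-large episodes I would lower bound the useful samples gathered per episode: arrivals for $i$ accrue at rate $\lambda_i$ irrespective of the action, and with $\ell b$ lower bounding the mean inter-arrival and delivery times (so no length-$H$ window holds more than $O(H/\ell b)$ of either type of sample) the radius for service $i$ reaches $\epsilon(\Delta)/4$ after at most $O\!\left(\frac{1}{\eta H}\,\epsilon(\Delta)^{-2}\log(NT^b/\delta)\right)$ episodes, where $\eta$ is the guaranteed per-episode fraction of steps supplying a fresh estimate. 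Multiplying the resulting count of non-clean episodes by a worst-case per-episode regret bound — obtained by truncating the unbounded per-step cost $c(\vec s,\vec u)=\sum_i S_i/\lambda_i$ at level $O(\log(T/\delta))$ using the geometric stationary tails of Proposition~\ref{prop:stationary-distributions}, which produces the factor $1/\min_i\min\{1/\lambda_i,1/\mu_i\}$ — adding the crude $\bP(\text{failure})$-weighted contribution from the bad event, and finally setting $\epsilon=\delta=1/T$ (so $-\log\epsilon=\log T$, $\log(NT^b/\delta)=\log(NT^{b+1})$) yields the stated bound.

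The main obstacle is the adaptivity hidden in $N_{i,\mu}(\tau_k)$: a service kept passive produces no departures, so the confidence interval for $\mu_i$ shrinks only during episodes in which the current optimistic policy actually hosts service $i$. Closing the sample-counting step therefore requires showing that every service whose delivery rate is genuinely needed to separate the optimal placement from the competitors is hosted a positive fraction $\eta$ of the time under the policies $W_{\tilde\te_k}$ that are actually run — equivalently, that the gap $\Delta$, and hence $\epsilon(\Delta)$, already ``ignores'' services that remain passive under all parameters surviving in $\cO(\tau_k)$. A secondary difficulty is making the per-episode cost bound and the clean-episode identity uniform over the finite set $\Theta$, for which the explicit geometric stationary distributions of Proposition~\ref{prop:stationary-distributions} are exactly what is needed.
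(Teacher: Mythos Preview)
Your episodic decomposition, optimism argument, and counting of non-clean episodes all match the paper's proof. The paper likewise defines a good set $\cG=\cG_1\cap\cG_2$, where $\cG_1=\{\te_0\in\cO(t)\ \forall t\}$ and $\cG_2$ is a sample-count concentration event, bounds $\bP(\cG^c)$ using the exponential-sum tail bound (this is what Lemma~\ref{lemma:exp_concentration} actually supplies) together with Azuma--Hoeffding, and shows that on $\cG$ only the first $k_1=O\bigl(\epsilon(\Delta)^{-2}\log(NT^b/\delta)\bigr)$ episodes can be sub-optimal.

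The adaptivity obstacle you flag is resolved more simply than your speculation about ``ignoring passive services.'' The paper uses Assumption~\ref{assum:1} and the finiteness of $\Theta$: for every $\te\in\Theta$, the policy $W_\te$ applied to the true system $\te_0$ is ergodic with finite average cost, and any service that is never hosted has a queue drifting to $+\infty$, contradicting finite cost. Hence every $W_\te$ hosts every service a strictly positive fraction of the time, and minimizing over the finitely many $\te\in\Theta$ yields a uniform $\eta>0$ (Lemma~\ref{lemma:ergodic}). The per-episode sample counts $N_{i,\mu}(\tau_k)$ then concentrate around their conditional expectations via Azuma--Hoeffding on the martingale increments (Lemma~\ref{lemma:g2}), which is the event $\cG_2$.

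Two smaller corrections. First, $\epsilon(\Delta)$ is not furnished by Lemma~\ref{lemma:exp_concentration}; that lemma is purely a concentration bound for sums of i.i.d.\ exponentials. The quantity $\epsilon(\Delta)$ comes from a separate continuity result (Lemma~\ref{lemma:wc}) stating that $|\bar C(W_{\tilde\te};\te')-\bar C(W_{\tilde\te};\te)|\le\Delta$ whenever $\|\te'-\te\|_\infty\le\epsilon(\Delta)$, uniformly in $\tilde\te\in\Theta$; the cross-reference in the theorem statement is a typo. Second, your tail-truncation of the per-step cost is unnecessary: since $\Theta$ is finite and every $W_\te$ is stabilizing by Assumption~\ref{assum:1}, the paper simply takes $H\max_{\te\in\Theta}\bar C(W_\te;\te_0)$ as the uniform per-episode regret bound.
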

\textit{Proof Sketch:} We show that the sample path regret can be upper-bounded by the number of episodes in which $\tilde{\te}_k$ is not equal to $\te$, i.e., sub-optimal episodes. We then decompose the sample space $\Omega$ into ``good set'' $\cG$ and ``bad set'' $\cG^c$. Loosely speaking, on $\cG$ we have that each customer is sampled sufficiently many times, and moreover the sample estimates $\hat{\te}(t)$ concentrate around the true value $\te_0$. We show that the good properties of $\cG$ imply a bound on the number of sub-optimal episodes, and hence the regret. The expected regret on $\cG^c$ is bounded primarily by bounding its probability.   The proof details are presented in Section~\ref{sec:learning-proof}.

\begin{remark}
It is well known by now that the learning regret of commonly encountered non-trivial learning tasks grows asymptotically as $O(\log T)$~\cite{lai1985asymptotically,lattimore2020bandit,auer2002finite}. A key research problem is to design a learning algorithm that yields the ``lowest possible'' pre-factor to this logarithmic regret. Theorem~\ref{th:ucb_regret} shows that UCB-Whittle has a regret $O(\log^2 T)$. An interesting question that arises is whether we could attain $O(\log T)$ regret with computationally tractable algorithms, and also what is the ``optimal'' instance-dependent~\cite{lattimore2020bandit} pre-factor.
\end{remark}

\subsubsection{Computational Complexity}
Problem~\eqref{def:tilde_te} needs to be solved at the beginning of $k$-th episode. Solving~\eqref{def:tilde_te} requires us to evaluate the average cost of the Whittle policy $W_{\te}$ for each possible system parameter $\te\in\Theta$. However, deriving the average cost of Whittle policy is computationally expensive because of the curse of dimensionality. However, we note that the Whittle index policy is asymptotically optimal as the population size is scaled up to $\infty$.  Moreover, it is shown in~\cite{verloop2016asymptotically} that the limiting value of the normalized\footnote{Normalization involves dividing the expected cost by the population size.} average cost is equal to the value of the allocation problem in which the hard constraints are relaxed to time-average constraints. This means that for large population size we could approximate the cost $\bar{C}(W_{\te};\te)$ by the optimal value of the relaxed problem. The relaxed problem is tractable since the decisions of different services are decoupled, and hence the computational complexity scales linearly with the number of {services}.  Moreover, since a threshold policy is optimal for each individual service, and the stationary probability for such a single service that employs a threshold policy, is easily computed, the quantity $\bar{C}(W_{\te};\te)$ can be obtained easily.

\subsection{Q-learning-Whittle}
 
 We then design a heuristic algorithm for learning Whittle indices based on the off-policy Q-learning algorithm~\cite{sutton2018reinforcement}. 
Our proposed algorithm leverages the threshold-structure of the optimal policy developed in Section~\ref{sec:lagrangian-relaxation} while learning the Q-values for different state-value action pairs.

Since our problem is indexable, we focus on bandit $i$ for the ease of exposition.  The learning setup remains the same as in UCB-Whittle, i.e., the  algorithm utilizes an episodic framework in which each episode last for $H$ consecutive time steps. Recall the single-service MDP as in~\eqref{def:single_mdp1}:
\begin{align}\label{eq:obj-subproblem}
\min_{\pi_i}\limsup_{T\rightarrow\infty} \frac{1}{T}\mathbb{E}_{\pi_i}\Bigg\{\int_0^T\bigg( C_i(s,a)
 {-W}(1-a)           \bigg)  dt      \Bigg\},
\end{align}
where $s,a$ denote the state and action for a single bandit. The state-action value $\forall (s,a)$ under policy $\pi_i$ follows
\begin{align}\nonumber\label{eq:Q_value}
    Q^{\pi_i}(s,a):=a(C_i(s,1)+\sum_{s^\prime}p_i(s^\prime |s,1)V^{\pi_i}(s^\prime))-f_i\\
 +(1-a)(C_i(s,0)-W_i(s)+\sum_{s^\prime}p_i(s^\prime |s,0)V^{\pi_i}(s^\prime)),
\end{align}
where $V^{\pi_i}(s^\prime)$ denotes the cost-to-go function  for bandit $i$ in state $s^\prime$ under the policy $\pi_i$,
\begin{align} \nonumber\label{eq:dynamic_programming}
& V^{\pi_i}(s)=\underset{a\in\{0,1\}}{\min}{a(C_i(s,1)+\sum\limits_{s^\prime}p_i(s^\prime |s,1)V^{\pi_i}(s^\prime))-f_i}+\\
&(1-a)\times(C_i(s,0)-W_i(s)+\sum\limits_{s^\prime}p_i(s^\prime |s,0)V^{\pi_i}(s^\prime)), 
\end{align}
 and $f_i$ is the optimal average cost. 
Since the Whittle index in state $s$ is defined as the value $W_i(s)$ such that actions $a=0$ and $a=1$ are equally preferred in current state $s$, i.e., $Q^{\pi_i}(s,0)=Q^{\pi_i}(s,1)$, we  have 
\begin{align}\label{eq:Whittle index_formulation}\nonumber
    W_i(s)&=C_i(s,0)-C_i(s,1)\\
    &\quad\qquad+\sum_{s^\prime}(p_i(s^\prime|s,0)-p_i(s^\prime|s,1))V^{\pi_i}(s^\prime).
\end{align}
However, the unknown transition probabilities $p_i(s^\prime|s,a)$ hinders us to directly calculate Whittle indices according to \eqref{eq:Whittle index_formulation}.

A novel Q-learning based method has recently been proposed in \cite{avrachenkov2020whittle} to jointly update $Q^{\pi_i}(s,a)$ and $W_i(s)$ in a two time-scale stochastic process with a rigorous convergence analysis.  It leverages the $\epsilon$-greedy mechanism for balancing the tradeoff between exploration and exploitation according to \eqref{eq:Q_value} and \eqref{eq:Whittle index_formulation}.  More specifically, when bandit $i$ moves into a new state $s^\prime$, with probability $\epsilon$, it greedily selects an action $a^\prime$ to maximize $Q(s^\prime, a^\prime)$, and with probability $1-\epsilon$, it selects a random action.  As a result, the Whittle indices for all states are coupled and the Q values need to be updated for every state-action pair.

Given that the single-service problem in \eqref{eq:obj-subproblem} is indexable with an appealing threshold-type optimal policy, our key observation here is that the Whittle index $W_i(s)$ for state $s$ is only a function of Whittle indices for previous states, i.e., $\{W_i(r), \forall r<s\}$.  This is because under a threshold policy $\pi_i=R,$ any state $s<R$ is made passive and $s\geq R$ is made active otherwise.  As a result, for any state of the Q function, the Q-learning following the threshold policy $\pi_i=R$ will select a deterministic action as  $a=0, \forall s<R$ and $a=1,$ otherwise.  In other words, the Q-learning under threshold policy $\pi_i=R$ only updates $Q^R(s,0), \forall s<R$ and $Q^R(s,1), \forall s\geq R$, while keeping the rest state-action values unchanged.  Following the definition of Whittle index and Proposition \ref{prop:whittle-index-closed}, the desired Whittle index $W_i(s)$ satisfies 
$Q^{s}(s,1)=Q^{s+1}(s,0).$
Therefore, given the value function $V^{\pi_i}(s)$ under threshold policy $\pi_i=R$, it is clear that $Q^s(s,1)$ is only a function of $\{W_i(r), \forall r<s\}$ and $Q^{s+1}(s,0)$ is only a function of $\{W_i(r), \forall r<s+1\}$.

Upon this key observation and inspired by \cite{avrachenkov2020whittle}, we propose a heuristic state-by-state Q-learning based Whittle index policy, named Q-learning-Whittle.  In particular, 
the Q-learning recursion for threshold $\pi_i=R$ is defined as \eqref{eq:Q_update} for each episode $k$, 
\begin{figure*}
\begin{align}\label{eq:Q_update}
  Q_{t_k+1}^R(s,a)=
\begin{cases}
  (1-\alpha(t_k))Q_{t_k}^R(s,1)+\alpha(t_k)(C_i(s,1)
  +Q_{t_k}^R(s^\prime, \mathbb{1}_{s^\prime\geq R})), \quad\text{if $s\geq R$},\\
  (1-\alpha(t_k))Q_{t_k}^R(s,0)+\alpha(t_k)(C_i(s,0)-W_{i,k}(s)
  +Q_{t_k}^R(s^\prime, \mathbb{1}_{s^\prime\geq R})), \quad\text{if $s<R$},
\end{cases}
\end{align}
\vspace{-0.15in}
\end{figure*}
where $t_k\in\{\tau_k, \tau_k+1, \ldots, \tau_k+H\}$ with $\tau_k:=H(k-1)$ being the start time slot for episode $k$. {$W_{i,k}(s)$ is the Whittle index for episode $k$ and is kept fixed for the entire episode.}
The learning rate $\alpha(t_k)$ satisfies 
    $\sum_{t=0}^\infty \alpha(t)=\infty$ and $\sum_{t=0}^\infty \alpha(t)^2<\infty$ \cite{borkar2009stochastic}.

Our goal is to learn the Whittle index $W_i(s)$ by iteratively updating $W_{i,k}(s)$ in the following manner
\begin{align}\label{eq: W_update}
    W_{i, k+1}(s)&=(1-\gamma(k))W_{i, k}(s)\nonumber\\
    &\quad\qquad+\gamma(k)(Q_{\tau_{k+1}}^{s+1}(s,0)-Q_{\tau_{k+1}}^s(s,1)),
\end{align}
with $\gamma(h)$ satisfying
    $\sum_{k=0}^\infty \gamma(k)=\infty,$ and $\sum_{k=0}^\infty \gamma(k)^2<\infty.$

In summary, for each state $s$, the Whittle index $W_i(s)$ is updated according to the two time-scale stochastic process defined in \eqref{eq:Q_update} and \eqref{eq: W_update}. Note that the Whittle indices in Q-learning-Whittle are only be updated at the start of each episode according to \eqref{eq: W_update} and this reduces the two time-scale stochastic process into two independent one time-scale stochastic processes.  This property makes our Q-learning-Whittle substantially enhance the sample efficiency with a faster convergence speed, which will be verified in the numerical evaluations in Section~\ref{sec:sim}.  We are also hopeful that this nature can benefit the convergence analysis in \cite{avrachenkov2020whittle} and we leave this as part of our future work. 
The entire procedure  is summarized in Algorithm \ref{Algorithm2}.

\begin{algorithm}
\caption{Q-learning-Whittle}
\label{Algorithm2}
\begin{algorithmic}[1]
\State Initialize: $Q^{\pi_i}(s,a)=0,~ W_i(s)$ for all $\pi_i\in\mathcal{S}$.
\For {$s\in\mathcal{S}$}
\State Set the threshold policy as $\pi_i=s$.
\For{Episodes $k=1,2,\ldots,$}
\State The initial state is assumed to be $s=0$ for all bandits.
\State Update $W_{i,k}(s)$ according to \eqref{eq: W_update}.
\For{$h=1,2,\ldots,H$}
\State Update $Q_{\tau_k+h}^s(s,a)$ according to~\eqref{eq:Q_update}.
\State Update $Q_{\tau_k+h}^{s+1}(s,a)$ according to~\eqref{eq:Q_update}.
\EndFor
\EndFor
\State $k\leftarrow k+1$.
\EndFor
\State Return: $W_i(s), \forall s\in\mathcal{S}$.
\end{algorithmic}
\end{algorithm}

Note that Algorithm \ref{Algorithm2} is only for  a single bandit. The server just needs to repeat the procedures of Algorithm \ref{Algorithm2} for $N$ times to achieve the Whittle indices for all bandits. Since there is no difference in the update mechanism for different bandits,  we omit the algorithm description for the whole system.  When considering the capacity of the server, i.e., it can only serve a maximum number of $K$ bandits, an easy implementation of the algorithm is to divide the bandits into multiple groups with size $K$, and the server sequentially learns the Whittle indices for the bandits in each group.

\section{Proofs of Main Results}\label{sec:learning-proof}

In this section, we provide the proofs of Theorem~\ref{th:ucb_regret} on UCB-Whittle.  
For the ease of exposition, we relegate the proof details of the preliminary results below to Appendix~\ref{sec:appendix-ucb}.

Throughout, we will make the following assumption regarding the set $\Theta$ that is known to contain the true parameter $\te_0$.
\begin{assumption}\label{assum:1}
The process $\vec{S}(t)$ that denotes the {number of services} under the application of Whittle policy $W_{\te_0}$, is ergodic~\cite{meyn2012markov}, i.e., the Markov chain is positive recurrent. Moreover, the associated average cost $\bar{C}(W_{\te_0}; \te_0)$ is finite.
\end{assumption}

\subsection{Preliminary Results}

We provide the following equivalent formulation of UCB-Whittle. This characterization turns out to be useful while analyzing its learning regret.
\begin{lemma}\label{lemma:equiv}
UCB-Whittle can equivalently be described as follows. At the beginning of $\cE_k$, the learning algorithm computes a certain ``value'' $I_{\te}(\tau_k)$ for each $\te\in \Theta$ as follows,
\begin{align}\label{def:index}
    I_{\te}(\tau_k) := \min_{\te\up\in \cO(\tau_k)} \bar{C}(W_{\te};\te\up),
\end{align}
where $\cO(t)$ is the confidence interval at time $t$~\eqref{def:ci}. It then chooses $\tilde{\te}_k$ the $\te$ that has the least value $I_{\te}(\tau_k)$, and implements the corresponding Whittle index rule $W_{\tilde{\te}_k}$ during $\cE_k$.
\end{lemma}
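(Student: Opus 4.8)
\textbf{Proof plan for Lemma~\ref{lemma:equiv}.}

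The plan is to show that the two descriptions of the algorithm select the same parameter $\tilde{\te}_k$ in every episode, hence implement the same Whittle rule, and are therefore the same algorithm. The only substantive thing to verify is that the minimizer of $\te\mapsto I_{\te}(\tau_k)$ defined by~\eqref{def:index} coincides with the minimizer of $\te\mapsto \bar{C}(W_{\te};\te)$ over $\cO(\tau_k)$ defined by~\eqref{def:tilde_te}, and that the empty-confidence-ball tie-breaking rule is the same in both formulations (it is, by fiat). So the core is a small two-line argument about nested minima.

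First I would fix an episode $k$ and write $O:=\cO(\tau_k)$ for brevity. In the original description~\eqref{def:tilde_te}, the algorithm picks $\tilde{\te}_k\in\arg\min_{\te\in O}\bar{C}(W_{\te};\te)$. In the reformulation, for each $\te\in\Theta$ it computes $I_{\te}(\tau_k)=\min_{\te\up\in O}\bar{C}(W_{\te};\te\up)$ and then picks the $\te$ minimizing $I_{\te}(\tau_k)$. The key observation is that the map $\te\up\mapsto\bar{C}(W_{\te};\te\up)$ is being minimized over $\te\up\in O$, and since $O\subseteq\Theta$, for any $\te\in O$ the feasible point $\te\up=\te$ is available, so $I_{\te}(\tau_k)\le \bar{C}(W_{\te};\te)$. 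The plan is then to argue that the outer minimum $\min_{\te\in\Theta} I_{\te}(\tau_k)$ equals $\min_{\te\in O}\bar{C}(W_{\te};\te)$: the inequality $\le$ follows by restricting the outer minimum to $\te\in O$ and using $I_{\te}(\tau_k)\le\bar{C}(W_{\te};\te)$; the inequality $\ge$ follows because $I_{\te}(\tau_k)=\bar{C}(W_{\te};\te\up)$ for some $\te\up\in O$, and $\bar{C}(W_{\te\up\up};\te\up)$ with $\te\up\up$ ranging over... more carefully, one notes that whatever pair $(\te,\te\up)$ with $\te\up\in O$ attains the outer min of $I$, the quantity $\bar{C}(W_{\te};\te\up)$ is of the form ``cost of running the Whittle rule for parameter $\te$ on a system with parameter $\te\up\in O$,'' and the benchmark in~\eqref{def:tilde_te} uses $\te\up=\te$; so one must check that a minimizing pair can always be taken with $\te=\te\up$. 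This holds because, for the $\te^*$ achieving $\min_{\te\in\Theta}I_\te(\tau_k)$ with witness $\te\up{}^*\in O$, the point $\te\up{}^*$ is itself in $\Theta$, so $I_{\te\up{}^*}(\tau_k)\le \bar{C}(W_{\te\up{}^*};\te\up{}^*)$, and chasing the definitions shows $\bar{C}(W_{\te\up{}^*};\te\up{}^*)$ sits between the two minima, forcing equality and exhibiting $\te\up{}^*\in O$ as a valid choice for~\eqref{def:tilde_te}. I would then record that any argmin of $I_{\te}(\tau_k)$ over $\Theta$ can be taken to lie in $O$ and to be an argmin of $\bar{C}(W_{\te};\te)$ over $O$, so the two rules produce consistent choices; the case $O=\emptyset$ is handled identically in both formulations (uniform-random over $\Theta$), completing the argument.

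I do not anticipate a serious obstacle; this is essentially an unwinding of definitions. The one place to be careful is the direction $\min_{\te\in\Theta}I_\te(\tau_k)\ge\min_{\te\in O}\bar{C}(W_{\te};\te)$, i.e.\ making sure the ``optimistic'' reformulation cannot do strictly better than the benchmark-style description; the resolution is exactly the remark above that a witness $\te\up\in O$ for the outer min of $I$ can itself be promoted to the role of $\te$, because $O\subseteq\Theta$. I would also state explicitly that ties in the outer $\arg\min$ are broken by the same (arbitrary but fixed) rule in both descriptions so that ``equivalent'' means literally the same sequence of deployed policies $W_{\tilde\te_k}$, and hence the same realized cost and the same regret~\eqref{def:regret}. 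This equivalence is what makes the later regret analysis convenient, since bounding the number of episodes with $I_{\te_0}(\tau_k)$ not the smallest value is cleaner than reasoning directly about~\eqref{def:tilde_te}.
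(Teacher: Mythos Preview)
Your argument has a genuine gap in the $\ge$ direction. You claim that for the minimizing pair $(\te^*,\te\up{}^*)$ of $\min_{\te\in\Theta}I_\te(\tau_k)$, the quantity $\bar{C}(W_{\te\up{}^*};\te\up{}^*)$ ``sits between the two minima, forcing equality''. But what you actually establish is only
\[
\min_{\te\in\Theta}I_\te(\tau_k)\le I_{\te\up{}^*}(\tau_k)\le \bar{C}(W_{\te\up{}^*};\te\up{}^*)
\quad\text{and}\quad
\min_{\te\in O}\bar{C}(W_\te;\te)\le \bar{C}(W_{\te\up{}^*};\te\up{}^*),
\]
i.e.\ \emph{both} minima are $\le \bar{C}(W_{\te\up{}^*};\te\up{}^*)$. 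That is not a sandwich; nothing here rules out $\min_{\te\in\Theta}I_\te(\tau_k)<\min_{\te\in O}\bar{C}(W_\te;\te)$. Concretely, take $\Theta=\{\te_1,\te_2\}$, $O=\{\te_1\}$, with $\bar{C}(W_{\te_1};\te_1)=10$ and $\bar{C}(W_{\te_2};\te_1)=5$. Then $\min_{\te\in\Theta}I_\te(\tau_k)=5$ (attained at $\te=\te_2$) while $\min_{\te\in O}\bar{C}(W_\te;\te)=10$, so the two descriptions pick different $\tilde{\te}_k$.

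What is missing is precisely the property $\bar{C}(W_{\te\up};\te\up)=\min_{\te\in\Theta}\bar{C}(W_{\te};\te\up)$ for every $\te\up$, i.e.\ that the Whittle rule designed for parameter $\te\up$ is the best among $\{W_\te:\te\in\Theta\}$ when the true system is $\te\up$. This is exactly the (asymptotic) optimality of the Whittle index policy, and it is the one substantive ingredient the paper invokes: it writes the reformulation as the double minimum $\min_{\te\in\Theta}\min_{\te\up\in O}\bar{C}(W_{\te};\te\up)$, swaps the order of minimization, and then uses Whittle optimality to collapse the inner minimum $\min_{\te\in\Theta}\bar{C}(W_{\te};\te\up)$ to $\bar{C}(W_{\te\up};\te\up)$, recovering~\eqref{def:tilde_te}. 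Once you add this step, your argument becomes correct (and in fact reduces to the paper's), but as written the ``chasing the definitions'' line does not go through.
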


We now prove a concentration result for $\hat{\te}(t)$.
\begin{lemma}\label{lemma:ci_fail}
Define the set (event) $\cG_1$ as follows,
\begin{align}\label{def:g1}
    \cG_1:= \left\{\omega \in \Omega: \te_0 \in \cO(t)~\forall t\in [1,T]  \right\}.
\end{align}
Consider UCB-Whittle employed with the parameter $K_1$ equal to $2\tau^2_h$, where the parameter $\tau_h$ is chosen so as to satisfy $\tau_h \ge \frac{-\log \epsilon}{\lambda}$. We have,
\begin{align}\label{prob_g1}
\bP( \cG_1 ) \ge 1 - (\delta +\epsilon).
\end{align}
\end{lemma}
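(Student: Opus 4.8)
\textbf{Proof proposal for Lemma~\ref{lemma:ci_fail}.}

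The plan is to bound the probability of the complementary bad event $\cG_1^c$, on which $\te_0 \notin \cO(t)$ for some $t \in [1,T]$, i.e., at some time step the true parameter escapes the confidence ball. Since $\cO(t)$ is built coordinate-wise from the intervals $\cO_i(t)$ in~\eqref{def:ci_i}, a union bound over the $N$ services and over the two coordinates (mean inter-arrival time and mean delivery time) reduces the task to controlling, for a single service $i$, the probability that $|\hat{m}\llm(t) - m\llm| > \epsilon\llm(t)$ (and the analogous event for $\hat{m}\lmu(t)$). The key point, already stressed in the text preceding the lemma, is that $\hat{m}\llm(t)$ and $\hat{m}\lmu(t)$ are \emph{unbiased} empirical averages of i.i.d.\ exponential random variables (inter-arrival times are $\mathrm{Exp}(\lambda_i)$, individual service times are $\mathrm{Exp}(\mu_i)$), so after $n$ samples the estimator is a sample mean of $n$ i.i.d.\ sub-exponential variables with parameter controlled by $\tau_h$.

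First I would fix service $i$ and coordinate (say the arrival coordinate), condition on the number of samples $N\llm(t) = n$, and apply a Bernstein / sub-exponential tail inequality to $\frac{1}{n}\sum_{j=1}^n (X_j - m\llm)$ where $X_j$ are the i.i.d.\ inter-arrival times. This gives a bound of the form $\bP(|\hat{m}\llm(t) - m\llm| > \epsilon) \le 2\exp(-c\, n\, \epsilon^2)$ for $\epsilon$ in the relevant (small-deviation) regime, where the constant $c$ is governed by the variance proxy $\tau_h^2$; plugging in $\epsilon = \epsilon\llm(t) = \sqrt{\tfrac{K_1}{n}\log(\tfrac{N T^b}{\delta})}$ with $K_1 = 2\tau_h^2$ makes the exponent equal to $-\log(\tfrac{N T^b}{\delta})$, i.e., the per-(service, coordinate, time) failure probability is at most $\tfrac{\delta}{N T^b}$. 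Summing over the $N$ services, the $2$ coordinates, and the at most $T$ relevant time steps (really only the episode-start times $\tau_k$, of which there are $T/H \le T$), the union bound yields $\bP(\cG_1^c) \le 2N \cdot T \cdot \tfrac{\delta}{N T^b} = 2\delta\, T^{1-b}$, which is at most $\delta$ for $b \ge 2$ (or, more carefully tracking constants, $\le \delta + \epsilon$). The role of the hypothesis $\tau_h \ge \tfrac{-\log\epsilon}{\lambda}$ is to guarantee that the desired deviation radius $\epsilon\llm(t)$ stays inside the sub-exponential "Gaussian" regime of the Bernstein bound even when $n$ is as small as $1$ — essentially it ensures $\epsilon\llm(t) \lesssim \tau_h \cdot (\text{something} \ge 1)$ — and it is also what lets us absorb the residual low-sample contribution into the extra $\epsilon$ term rather than $\delta$.

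The main obstacle, and the place I would be most careful, is the interplay between the \emph{random} sample count $N\llm(t)$ and the concentration bound: because the radius $\epsilon\llm(t)$ itself depends on $N\llm(t)$, one cannot naively apply a fixed-$n$ tail bound. The clean way around this is a union bound over all possible values $n = 1, 2, \ldots$ of $N\llm(t)$ combined with a fresh-sample (stopping-time / optional-stopping or maximal-inequality) argument, or equivalently noting that the inter-arrival and service-time samples form an i.i.d.\ stream independent of the allocation decisions, so one can condition on the stream and the count is just an index into it. I would handle the possibility of $0$ samples by the initialization convention (and by noting $\epsilon\llm(t) = \infty$ when $N\llm(t)=0$, so the event is vacuous there). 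The remaining pieces — the exact sub-exponential constant for $\mathrm{Exp}(\lambda)$ variables, verifying $\epsilon\llm(t)$ lies in the valid range of the Bernstein bound given $\tau_h \ge -\log\epsilon/\lambda$, and checking the arithmetic $2\delta T^{1-b} \le \delta + \epsilon$ — are routine and I would relegate them to the appendix as the text already indicates.
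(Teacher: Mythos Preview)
Your proposal is correct and follows essentially the same route as the paper: fix the sample count, apply a sub-Gaussian-type tail bound for the empirical mean of i.i.d.\ exponentials with the choice $K_1 = 2\tau_h^2$ so that the per-event failure probability is $\delta/(N T^b)$, then union-bound over the $N$ services and over the at most $T$ possible values of the sample count to handle the random $N\llm(t)$. The only notable difference is the source of the additive $\epsilon$: the paper invokes a dedicated truncation lemma (Lemma~\ref{lemma:exp_concentration}) in which $\tau_h$ is literally a truncation level and the condition $\tau_h \ge -\log\epsilon/\lambda$ guarantees $\bP(X>\tau_h)\le\epsilon$, so the $\epsilon$ is the truncation-error term in the concentration bound itself, not a ``low-sample residual'' as you describe; otherwise the argument is the same.
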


We now show that with a high probability, the value $I_{\te_0}(t)$ of the Whittle policy corresponding to the true value $\te_0$ never falls above a certain threshold value of $\bar{C}(W_{\te_0};\te_0)$.
\begin{lemma}\label{lemma:true_lb}
On the set $\cG_1$ we have $I_{\te_0}(t) \le \bar{C}(W_{\te_0};\te_0),~ \forall t \in [1,T] $.
\end{lemma}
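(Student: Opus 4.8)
\textbf{Proof plan for Lemma~\ref{lemma:true_lb}.}
The plan is to unwind the definition of $I_{\te_0}(t)$ from Lemma~\ref{lemma:equiv} and use the fact that on $\cG_1$ the true parameter $\te_0$ lies inside the confidence ball $\cO(t)$ for every $t\in[1,T]$. Recall that by~\eqref{def:index},
\begin{align}\label{eq:plan-index-def}
I_{\te_0}(t) = \min_{\te\up \in \cO(t)} \bar{C}(W_{\te_0};\te\up),
\end{align}
i.e. it is the most optimistic estimate of the average cost of the Whittle policy $W_{\te_0}$, where the optimism is taken over all parameter values $\te\up$ consistent with the observations so far.

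First I would observe that on the event $\cG_1$, by definition~\eqref{def:g1}, we have $\te_0 \in \cO(t)$ for all $t\in[1,T]$. Hence $\te_0$ is one of the feasible points in the minimization~\eqref{eq:plan-index-def}. Since the minimum over a set is no larger than the value at any particular feasible point, taking $\te\up = \te_0$ gives
\begin{align}\label{eq:plan-bound}
I_{\te_0}(t) = \min_{\te\up \in \cO(t)} \bar{C}(W_{\te_0};\te\up) \le \bar{C}(W_{\te_0};\te_0),
\end{align}
which holds simultaneously for all $t\in[1,T]$ on $\cG_1$. This is exactly the claimed bound.

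There is essentially no obstacle here: the lemma is a one-line consequence of the minimization structure together with membership of $\te_0$ in the confidence ball, which is precisely what Lemma~\ref{lemma:ci_fail} guarantees on $\cG_1$. The only point requiring a modicum of care is to note that the confidence ball $\cO(t)$ is a subset of $\Theta$ (by~\eqref{def:ci}) and that $\bar{C}(W_{\te_0};\cdot)$ is well defined on it, so the quantity $\bar{C}(W_{\te_0};\te\up)$ appearing in the minimization is meaningful for every feasible $\te\up$; Assumption~\ref{assum:1} ensures finiteness at $\te_0$ itself. I would also remark that, although $\cO(t)$ and $I_{\te}(t)$ are defined for every $t$, they are actually evaluated only at the episode boundaries $\tau_k$, so the statement ``for all $t\in[1,T]$'' is stronger than strictly needed but costs nothing extra.
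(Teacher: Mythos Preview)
Your proposal is correct and follows exactly the same approach as the paper's proof: both use the definition of $I_{\te_0}(t)$ from Lemma~\ref{lemma:equiv} together with the fact that on $\cG_1$ the true parameter $\te_0$ belongs to $\cO(t)$, so that $\te_0$ is a feasible point in the minimization and the bound follows immediately. Your write-up is simply a more detailed unpacking of the paper's two-sentence argument.
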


\begin{lemma}\label{lemma:ergodic}
Consider the process $\vec{S}(t)$ that evolves under the application of $W_{\tilde{\te}_k}$, where $\tilde{\te}_k$ is the optimistic estimate obtained at the beginning of $\cE_k$ by solving~\eqref{def:tilde_te}. Then, on the set $\cG_1$,
the following holds: there exists an $\eta>0$ such that for each {service $i$},
\begin{align*}
    \frac{1}{H}\bE \left( \sum_{t\in \cE_k}  A_i(s) \Big| \mathcal{F}_{\tau_k}\right)\ge \eta,
\end{align*}
where $A_i(s)$ is $1$ if {service $i$ is placed at time $s$}, and is $0$ otherwise. A similar inequality is also satisfied by the cumulative arrivals.
\end{lemma}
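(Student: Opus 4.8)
\textbf{Proof proposal for Lemma~\ref{lemma:ergodic}.}

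The plan is to exploit the fact that on $\cG_1$ the true parameter $\te_0$ lies in every confidence ball $\cO(\tau_k)$, so the optimistic estimate $\tilde{\te}_k$ is always drawn from a \emph{finite} set $\Theta$ of candidate parameters, each of which (under Assumption~\ref{assum:1}) induces an ergodic, positive-recurrent chain with finite average cost. First I would observe that the process $\vec{S}(t)$ during episode $\cE_k$, conditioned on $\mathcal{F}_{\tau_k}$, is started from the reset state $\vec{0}$ and runs under the stationary Whittle policy $W_{\tilde{\te}_k}$; since $\tilde{\te}_k \in \Theta$, there are only finitely many possible chains to consider. For each fixed $\te \in \Theta$, under $W_{\te}$ the chain is irreducible and positive recurrent on its communicating class, and state $\vec{0}$ is recurrent with finite expected return time. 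In particular, starting from $\vec{0}$, the long-run fraction of time that service $i$ is active under $W_{\te}$ is some strictly positive constant $\eta_i(\te) > 0$ — strictly positive because if service $i$ were never placed, its queue $S_i(t)$ would be a pure birth process (arrivals at rate $\lambda_i$, no departures since $d_i(S_i,0)=0$), hence transient, contradicting positive recurrence of the joint chain. Wait: I should be careful that this requires $\lambda_i > 0$, which holds since $\te_0$ has finite mean inter-arrival times and the candidate set $\Theta$ consists of such parameters.

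The second step is to pass from the asymptotic time-average to a finite-horizon lower bound on $\frac{1}{H}\bE\big(\sum_{t\in\cE_k} A_i(s)\mid \mathcal{F}_{\tau_k}\big)$. Here I would use that, for an ergodic finite-or-countable chain started at a fixed state, the Cesàro averages $\frac{1}{H}\bE(\sum_{t=1}^H A_i(t))$ converge to $\eta_i(\te)$ as $H\to\infty$; hence for each $\te$ there is $H_0(\te)$ and $\eta_i'(\te) := \eta_i(\te)/2 > 0$ such that the finite-horizon average exceeds $\eta_i'(\te)$ for all $H \ge H_0(\te)$. Since $\Theta$ is finite, take $\eta := \frac{1}{2}\min_{\te\in\Theta}\min_i \eta_i(\te) > 0$ and require $H \ge \max_{\te\in\Theta} H_0(\te)$; this single $\eta$ works uniformly across all episodes and all realizations of $\tilde{\te}_k \in \Theta$ on $\cG_1$. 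The statement for cumulative arrivals is entirely analogous: the number of arrivals of service $i$ in $\cE_k$ has expectation growing like $\lambda_i H$ (in the embedded-chain/sampled formulation, the fraction of transitions that are type-$i$ arrivals is bounded below by a positive constant determined by $\lambda_i$ relative to the total transition rate, which is itself bounded on $\Theta$), so the same finite-$\Theta$ argument applies.

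The main obstacle I anticipate is making the ``finite-horizon Cesàro average is bounded below'' step fully rigorous given that the state space is \emph{countably infinite} (queue lengths unbounded) rather than finite — for countable chains one needs, beyond positive recurrence, some uniform integrability or a drift/Lyapunov condition to guarantee that $\frac{1}{H}\bE(\sum_{t=1}^H A_i(t)) \to \eta_i(\te)$ and not merely the almost-sure ergodic average. The cleanest route is to invoke a Foster–Lyapunov argument: the birth rates are uniformly bounded by $\max_i \lambda_i$ and the Whittle policy places the $K$ most backlogged services, so one can exhibit a Lyapunov function (e.g. $\sum_i S_i$) with negative drift outside a finite set, which yields geometric ergodicity and hence the required uniform convergence of expectations; alternatively one can truncate the state space at a high level, bound the probability of exceeding it uniformly over $\Theta$ via the Lyapunov bound, and reduce to the finite-state case. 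I would also need to double-check the subtle point that $A_i(s)$ in the statement is being averaged under the \emph{conditional} measure given $\mathcal{F}_{\tau_k}$, which is exactly the measure of the chain restarted from $\vec{0}$ under $W_{\tilde{\te}_k}$ — this is consistent with the episodic reset assumption, so no additional difficulty arises there. Modulo these technical reinforcements of positive recurrence to (uniform) geometric ergodicity, the argument reduces to elementary facts about finitely many ergodic Markov chains.
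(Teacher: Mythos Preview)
Your proposal is correct and follows essentially the same approach as the paper: finiteness of $\Theta$ combined with the observation that under any $W_{\te}$ with finite average cost each service must receive a strictly positive long-run fraction of placements (else its queue is a pure birth process and the chain is not positive recurrent), and then taking a minimum over the finitely many $\te\in\Theta$. In fact your treatment is more careful than the paper's own proof, which is a three-line sketch that does not address the finite-horizon versus asymptotic-average issue you flag; your Foster--Lyapunov/geometric-ergodicity patch is a reasonable way to close that gap.
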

We next show that if {each service} has been sampled ``sufficiently'' many times, then the value $I_{\te}(t)$ attached to a sub-optimal $\te$ stays above the threshold $\bar{C}(W_{\te_0};\te_0)$.
Let
\begin{align}\label{def:k1}
k_1 := K_3 \log\left(\frac{N T^{b}}{\delta} \right),
\end{align}
where $K_3 \ge \frac{2 K_1}{\eta H \epsilon^2_1},$
and the parameter $\epsilon_1$ satisfies $\epsilon_1\le \epsilon(\Delta)$ (see Lemma~\ref{lemma:exp_concentration}). Define $\cG_2$ to be the following set (event)
\begin{align}\label{def:g2}
\cG_2 := \left\{\omega \in \Omega: N_i(\tau_{k_1}) \ge \frac{y_i}{2},\forall i \right\},
\end{align}
where in the above we denote
\begin{align*}
    y_i := \sum_{k=0}^{k_1} \bE\left( \sum_{t\in\cE_k} A_i(t)| \mathcal{F}_{\tau_k}\right).
\end{align*}
On the set $\cG_2$ we can ensure lower bounds on the number of samples $N_i(t)$ of a service, since they are tightly concentrated around its mean value. Since the regret at any time $t$ increases with the size of confidence balls~\eqref{def:ci}, and since this size decreases with $N_i(t)$, we are able to obtain tight upper-bounds on regret on $\cG_2$.

\begin{lemma}\label{lemma:3}
Let $k_1$ be as in~\eqref{def:k1}. {On the set $\cG_1 \cap \cG_2$} we have that for episodes $k>k_1$, the value $I_{\te}(t)$ of any sub-optimal $W_{\te}$ is greater than $\bar{C}(W_{\te_0};\te_0)$.
\end{lemma}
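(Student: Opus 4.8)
\textbf{Proof plan for Lemma~\ref{lemma:3}.}
The plan is to exploit the structure we have already established: on $\cG_1$ the true parameter $\te_0$ lies in every confidence ball $\cO(\tau_k)$, and on $\cG_2$ each service has been sampled at least $y_i/2$ times by episode $k_1$, which by the definition of $y_i$ and Lemma~\ref{lemma:ergodic} means $N_i(\tau_{k_1}) \ge \eta H k_1 / 2$. First I would fix a sub-optimal $\te \in \Theta$, i.e.\ one with $\bar C(W_\te;\te_0) \neq \bar C(W_{\te_0};\te_0)$, so that by the definition of the gap $\Delta$ in~\eqref{def:gap} we have $\bar C(W_\te;\te) \le \bar C(W_{\te_0};\te_0) - \Delta$. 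The goal is to show that for $k>k_1$ the optimistic value $I_\te(\tau_k) = \min_{\te\up \in \cO(\tau_k)} \bar C(W_\te;\te\up)$ exceeds $\bar C(W_{\te_0};\te_0)$, because then by Lemma~\ref{lemma:true_lb} (which gives $I_{\te_0}(\tau_k) \le \bar C(W_{\te_0};\te_0)$) the algorithm, which picks the $\te$ of least $I_\te$, will never select this sub-optimal $\te$ after episode $k_1$.

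The key steps, in order: (i) On $\cG_2$, bound the confidence radii at time $\tau_k$ for $k>k_1$ — since $N\llm(\tau_k), N\lmu(\tau_k) \ge N_i(\tau_{k_1}) \ge \eta H k_1/2$ and $k_1 = K_3 \log(NT^b/\delta)$ with $K_3 \ge 2K_1/(\eta H \epsilon_1^2)$, the radii in~\eqref{def:radius} satisfy $\epsilon\llm(\tau_k), \epsilon\lmu(\tau_k) \le \epsilon_1 \le \epsilon(\Delta)$. (ii) Therefore every $\te\up \in \cO(\tau_k)$ is within $\epsilon(\Delta)$ (coordinate-wise, in the inter-arrival/delivery-time parametrization) of the empirical estimate $\hat\te(\tau_k)$, which on $\cG_1$ is itself within $\epsilon(\Delta)$ of $\te_0$; hence $\cO(\tau_k) \subseteq B(\te_0, 2\epsilon(\Delta))$, or more directly, any $\te\up \in \cO(\tau_k)$ is close enough to $\te_0$ that Lemma~\ref{lemma:exp_concentration} applies. (iii) Invoke Lemma~\ref{lemma:exp_concentration} (the ``$\epsilon(\Delta)$'' continuity statement referenced in the theorem and in~\eqref{def:g2}): the map $\te\up \mapsto \bar C(W_\te;\te\up)$ is continuous, and $\epsilon(\Delta)$ is chosen precisely so that $|\te\up - \te| \le \epsilon(\Delta)$ implies $|\bar C(W_\te;\te\up) - \bar C(W_\te;\te)| < \Delta/2$ (and similarly the estimate concentrates). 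Combining with step (i)'s conclusion that $\cO(\tau_k)$ is contained in an $\epsilon(\Delta)$-ball around $\te$ as well — or chaining two $\Delta/2$ perturbations — we get $\bar C(W_\te;\te\up) \ge \bar C(W_\te;\te) - \Delta/2 \ge \bar C(W_{\te_0};\te_0) - \Delta + \text{(something)} $; I would calibrate $\epsilon_1$ and the constant in $\epsilon(\Delta)$ so the net perturbation is strictly less than $\Delta$, yielding $\bar C(W_\te;\te\up) > \bar C(W_{\te_0};\te_0)$ for all $\te\up\in\cO(\tau_k)$. (iv) Take the minimum over $\te\up \in \cO(\tau_k)$ to conclude $I_\te(\tau_k) > \bar C(W_{\te_0};\te_0)$, and union-bound over the finitely many sub-optimal $\te \in \Theta$.

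The main obstacle I anticipate is step (i)--(iii): making the chain of perturbations airtight, since we are perturbing \emph{twice} — once because the true $\te_0$ may differ from $\hat\te(\tau_k)$ (controlled on $\cG_1$), and once because the optimistic $\te\up$ may differ from $\hat\te(\tau_k)$ (controlled by the radius bound on $\cG_2$) — and then translating a parameter perturbation into an average-cost perturbation via the continuity of $\te\up\mapsto \bar C(W_\te;\te\up)$ encoded in Lemma~\ref{lemma:exp_concentration}. One must verify that $\epsilon(\Delta)$ as defined there is small enough that even the \emph{sum} of the two perturbation effects on the average cost is below $\Delta$; this forces the choice $\epsilon_1 \le \epsilon(\Delta)$ and the lower bound on $K_3$ in~\eqref{def:k1}, and it is where the $\eta$ from Lemma~\ref{lemma:ergodic} enters (it guarantees the sample counts $N_i$ grow linearly in $k$ so that $k_1$ episodes suffice to shrink the radii). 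Everything else is bookkeeping: the reduction to counting sub-optimal episodes, the use of Lemma~\ref{lemma:true_lb}, and the finite union bound over $\Theta$.
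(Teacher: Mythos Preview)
Your skeleton---bound the sample counts on $\cG_2$, shrink the confidence radii below $\epsilon(\Delta)$, then invoke continuity of $\te\up\mapsto\bar C(W_\te;\te\up)$---is exactly the paper's argument. One minor point: the continuity statement you need is Lemma~\ref{lemma:wc}, not Lemma~\ref{lemma:exp_concentration}; the cross-reference near~\eqref{def:k1} in the paper is a typo, and you have inherited it. Steps (i)--(ii) of your plan reproduce the paper's proof verbatim: on $\cG_1\cap\cG_2$ the radii satisfy $\epsilon_{i,\cdot}(\tau_k)\le\sqrt{2K_1/(\eta H K_3)}\le\epsilon(\Delta)$, so every $\te\up\in\cO(\tau_k)$ lies within $\epsilon(\Delta)$ of $\te_0$ (the paper quietly absorbs the factor of two you flag).

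The genuine gap is in step (iii). You anchor the continuity estimate at the \emph{sub-optimal} $\te$, writing $\bar C(W_\te;\te\up)\ge\bar C(W_\te;\te)-\Delta/2$ and then feeding in $\bar C(W_\te;\te)\le\bar C(W_{\te_0};\te_0)-\Delta$ from~\eqref{def:gap}. Neither step works: $\te\up\in\cO(\tau_k)$ is close to $\te_0$, not to $\te$, so the first inequality is unjustified; and the gap gives an \emph{upper} bound on $\bar C(W_\te;\te)$, which when chained after a lower bound yields $\bar C(W_\te;\te\up)\ge\bar C(W_{\te_0};\te_0)-3\Delta/2$---the wrong direction, and no calibration of constants rescues the ``$+\text{(something)}$''. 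The paper instead anchors at $\te_0$: Lemma~\ref{lemma:wc} applied with $\te\up=\te^\star$ (the minimizer in~\eqref{def:index}) and base point $\te_0$ gives $|I_\te(\tau_k)-\bar C(W_\te;\te_0)|\le\Delta$, hence $I_\te(\tau_k)\ge\bar C(W_\te;\te_0)-\Delta$, and the conclusion follows because $\bar C(W_\te;\te_0)$---the cost of the sub-optimal policy on the \emph{true} system---exceeds $\bar C(W_{\te_0};\te_0)$. To fix your argument, replace every $\bar C(W_\te;\te)$ in step (iii) by $\bar C(W_\te;\te_0)$ and drop the appeal to~\eqref{def:gap} in favor of direct sub-optimality on the true system.
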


\subsection{Regret Analysis}
Now we are ready to prove the regret of UCB-Whittle.
We begin with the following result that allows us to decompose the regret into sum of ``episodic regrets''.
\begin{lemma}\label{lemma:decompose}
For a learning algorithm $\phi$, we can upper-bound its regret $R(\phi,T)$~\eqref{def:regret} by a sum of ``episodic regrets'' as follows,
\begin{align*}
    \bE \left( R(\phi,T) \right) \le \sum_{k=1}^{K} \bE \left[\id\left\{\tilde{\te}_k \neq \te_0 \right\} \max_{\te \in \Theta} \bar{C}(W_{\te};\te_0)\right],
\end{align*}
where $K$ denotes the number of episodes until $T$.
\end{lemma}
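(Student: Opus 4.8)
The plan is to show that the sample-path regret $R(\phi,T)$ can only accrue during those episodes in which the learning rule has chosen a suboptimal parameter, and that the per-episode contribution in such an episode is bounded by a worst-case constant. First I would recall the definition~\eqref{def:regret} and split the total regret into a sum over the $K$ episodes, writing $R(\phi,T)=\sum_{k=1}^{K} r_k$ where $r_k := \sum_{t\in\cE_k} c(\vec{s}(t),\vec{u}(t)) - \bar{C}(W_{\te_0};\te_0,H)$ is the episodic regret. Since the state is reset to $\vec{0}$ at the start of every episode and $W_{\tilde{\te}_k}$ is held fixed throughout $\cE_k$, the conditional expectation $\bE\left( \sum_{t\in\cE_k} c(\vec{s}(t),\vec{u}(t)) \,\big|\, \mathcal{F}_{\tau_k} \right)$ depends on $\mathcal{F}_{\tau_k}$ only through the chosen parameter $\tilde{\te}_k$, and equals $\bar{C}(W_{\tilde{\te}_k};\te_0,H)$ — the expected $H$-step cost of running the Whittle rule $W_{\tilde{\te}_k}$ on the true system $\te_0$ from the zero state. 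Taking expectations and using the tower property gives
\begin{align*}
\bE(r_k) = \bE\!\left[ \bar{C}(W_{\tilde{\te}_k};\te_0,H) - \bar{C}(W_{\te_0};\te_0,H) \right].
\end{align*}

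Next I would observe that when $\tilde{\te}_k = \te_0$ the integrand above is identically zero, so only the event $\{\tilde{\te}_k \neq \te_0\}$ contributes; hence $\bE(r_k) = \bE\!\left[ \id\{\tilde{\te}_k \neq \te_0\}\left( \bar{C}(W_{\tilde{\te}_k};\te_0,H) - \bar{C}(W_{\te_0};\te_0,H)\right)\right]$. Dropping the (nonnegative, since $W_{\te_0}$ is the benchmark) subtracted term and bounding $\bar{C}(W_{\tilde{\te}_k};\te_0,H) \le \max_{\te\in\Theta}\bar{C}(W_{\te};\te_0,H)$, and then (absorbing the horizon length, or rescaling cost per step) $\le \max_{\te\in\Theta}\bar{C}(W_{\te};\te_0)$, yields $\bE(r_k) \le \bE\!\left[ \id\{\tilde{\te}_k \neq \te_0\} \max_{\te\in\Theta}\bar{C}(W_{\te};\te_0)\right]$. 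Summing over $k=1,\dots,K$ gives the claimed bound. Here I would invoke Assumption~\ref{assum:1} to guarantee that $\bar{C}(W_{\te_0};\te_0)$ is finite, and note that finiteness of $\max_{\te\in\Theta}\bar{C}(W_{\te};\te_0)$ follows because $\Theta$ is a finite set — though strictly one should assume, or argue, that each $W_{\te}$ induces a positive-recurrent chain with finite average cost on the true system, which I would flag as a standing regularity condition.

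The main obstacle is the passage from the finite-horizon quantity $\bar{C}(W_{\te};\te_0,H)$ that appears naturally after conditioning to the infinite-horizon average cost $\bar{C}(W_{\te};\te_0)$ that appears in the statement; this requires controlling the transient (burn-in) cost incurred because each episode starts from the reset state $\vec{0}$ rather than from stationarity. I expect to handle this by choosing $H$ large enough that the per-step time-averaged cost over an episode is within a constant factor of the stationary average cost uniformly over the finitely many $\te\in\Theta$ — ergodicity of each induced chain provides a geometric mixing bound, so the transient contributes only an $O(1)$ additive term per episode that can be folded into the constant. The remaining steps — the conditional-expectation identity and the indicator argument — are routine once this quantitative comparison is in place.
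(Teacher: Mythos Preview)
Your approach is essentially the same as the paper's: the paper's entire proof is the single sentence ``The proof follows since in each episode, the learning algorithm uses a stationary control policy from within the set of stabilizing controllers $\{W_{\te}:\te\in\Theta\}$,'' which is precisely the episodic decomposition plus indicator argument you carry out in detail. You have in fact gone further than the paper by correctly flagging the passage from the finite-horizon episodic cost $\bar{C}(W_{\te};\te_0,H)$ to the infinite-horizon average $\bar{C}(W_{\te};\te_0)$ as a nontrivial step requiring an ergodicity/mixing bound; the paper does not address this point (and indeed in the proof of Theorem~\ref{th:ucb_regret} simply asserts the per-episode regret is bounded by $H\cdot\max_{\te}\bar{C}(W_{\te};\te_0)$ without justification), so your discussion here is a genuine addition rather than a deviation.
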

\begin{proof}
The proof follows since in each episode, the learning algorithm uses a stationary control policy from within the set of stabilizing controllers $\left\{ W_{\te}:\te\in\Theta \right\}$.
\end{proof}

\begin{proof}[\textbf{Proof of Theorem~\ref{th:ucb_regret}}]
Lemma~\ref{lemma:decompose}, the decomposition result shows us the following:
\begin{enumerate}
    \item Episodic regret is $0$ in those episodes in which $\tilde{\te}_k$ is equal to $\te_0$.
    \item If $\tilde{\te}_k$ is not equal to $\te_0$, then the regret is bounded by $\max_{\te\in\Theta} \bar{C}(W_{\te};\te_0)$ times $H$.
\end{enumerate}
Define $\cG:= \cG_1\cap \cG_2$ as the good set, where $\cG_1$ and $\cG_2$ are as in~\eqref{def:g1} and~\eqref{def:g2} respectively. Note that it follows from Lemma~\ref{lemma:ci_fail} and Lemma~\ref{lemma:g2} that the probability of $\cG$ is greater than $1-\left(\delta + \epsilon + \frac{\delta}{T}\right)$.

We thus have the following three sources of regret:

(i) \emph{Regret due to sub-optimal episodes on $\cG$}: It follows from Lemma~\ref{lemma:true_lb} and~\ref{lemma:3} that the regret is $0$ in an episode if $k>k_1$.

(ii) Regret on $\cG^{c}_1$: As is seen in the proof of Lemma~\ref{lemma:ci_fail}, the probability that the confidence interval $\cO(\tau_k)$ in episode $k$ fails is upper-bounded as $\frac{\delta}{T^b}$. Since the regret within such an episode is bounded by $H\max_{\te\in\Theta} \bar{C}(W_{\te};\te_0)$, the cumulative expected regret is upper-bounded by $ \frac{K H\max_{\te\in\Theta} \bar{C}(W_{\te};\te_0)}{T^{b}}= \frac{\max_{\te\in\Theta} \bar{C}(W_{\te};\te_0)}{T^{b-1}}$, where the constant $b>1$.

{(iii) Regret on $\cG^{c}_2$: Since the probability of this set is $O\left(\frac{1}{T}\right)$, and the regret on $\cG^{c}_2$ can be trivially upper-bounded by $T$, this component is $O(1)$.} 
The proof of the claimed upper-bound on expected regret then follows by adding the above three bounds.
\end{proof}

\begin{table}
\centering
\begin{tabular}{|c|ccccccc|}
 \hline
 $\lambda/\mu$ & 1 & 2 & 3 & 4 & 5 & 6 & 7 \\
 \hline
Index & 4.46  &  3.35  &  3.11   & 1.06   & 1.231   & 0.706
   & 2.55   \\
   \hline
\end{tabular}
\caption{Relative optimality gap in $\%$. }
\label{table:gap}
\vspace{-0.2in}
\end{table}

\section{Simulation Results}\label{sec:sim}

In this section, we first validate the performance of our proposed Whittle index policy in Section~\ref{sec:lagrangian-relaxation}, and then evaluate the performance of UCB-Whittle and Q-learning-Whittle.

\subsection{Performance of Whittle Index Policy}
We consider two classes of services with $\mu_1=\mu_2=5$ and $\lambda_1=\lambda_2$.  The cost function is given by a linear function as $C_k(S_k,a)=S_k/\lambda_k$ in \eqref{eq:cost} for $k\in\{1,2\}$. We compare the relative error of the Whittle index policy with respect to the optimal value iteration policy, which optimalitily solves the  MDP problem in \eqref{eq:obj-continous-DP}. The relative optimality gap for different $\lambda/\mu$ is given in Table \ref{table:gap}.  We observe that our Whittle index policy achieves nearly optimal performance across all loads.

We also depict the actions taken under Whittle index policy and the optimal policy with $\lambda_1/\mu_1=4$ and $\lambda_2/\mu_2=6$ under the two bandits system.  Figure~\ref{fig:comparison} shows the switching curves for serving any one of the two bandits under these two policies.  The curves divide the entire space into two distinct regions, in which bandit $1$ and bandit $2$ are served  respectively.  It is clear that our Whittle index policy coincides with the optimal policy almost across all the state space, and hence captures the qualitative structure of the optimal policy.

\subsection{Performance of Learning Algorithms}

We consider $N$ services and at each time slot a maximum number $K=N/2$ services can be placed at the edge server. The bandits have a common set of state $\mathcal{S}=\{0, 1, 2, 3, 4, 5\}$, where $s\in\mathcal{S}$ represents the queue length of the service. We assume every bandit has the same delivery time and we set $\mu=5$. The horizon of each episode is set as $H=100$.

In Figure~\ref{Fig.convergence}, we verify the convergence of UCB-Whittle and Q-learning-Whittle and compare them with the traditional $\epsilon$-greedy Q-learning based index policy \cite{avrachenkov2020whittle}. We assume that all bandits share the same linear cost function as $C_i(S_i,a)=S_i/\lambda_i$, which represents the average serving time for bandit $i$ according to Little's law~\cite{john1961little,kleinrock1976queueing}.  For Q-learning-Whittle, the training parameters are set as $\alpha(t_k)=0.01,  \gamma(k)=0.005$.  From Figure~\ref{Fig.convergence}, we observe that the Whittle indices for each state obtained by both UCB-Whittle and Q-learning-Whittle converge as the number of episodes goes large.  Moreover, the UCB-Whittle algorithm has faster convergence speed for the Whittle indices of states with a longer queue length, whereas it requires more iterations to guarantee the convergence of those Whittle indices of states with a shorter queue length.  Specifically, the Whittle indices of Q-learning-Whittle converges state by state, and the whole learning process converges roughly in $180$ episodes, while UCB-Whittle converges roughly in $120$ episodes.  However, the conventional $\epsilon$-greedy Q-learning based indices, i.e., the benchmark in \cite{avrachenkov2020whittle}, requires a much larger number of episodes to converge.  This demonstrates the sample efficiency of our proposed algorithms.

\begin{figure*}
	\center
	\begin{minipage}[b]{.32\textwidth}
    	\includegraphics[width=0.95\textwidth]{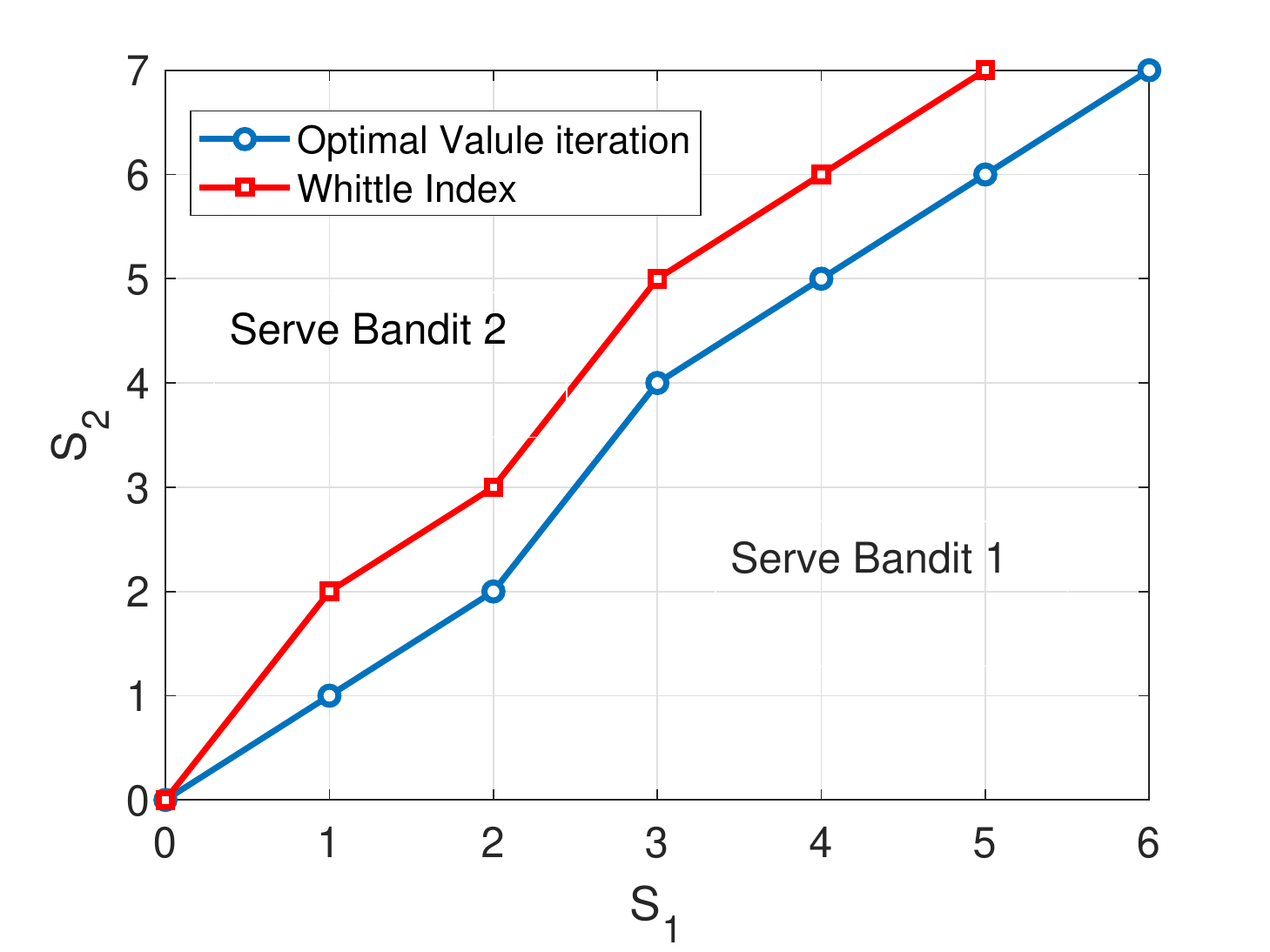}
\caption{Comparison between the optimal policy and Whittle index policy.}
\label{fig:comparison}
	\end{minipage}
	\begin{minipage}[b]{.32\textwidth}
	    \includegraphics[width=0.95\textwidth]{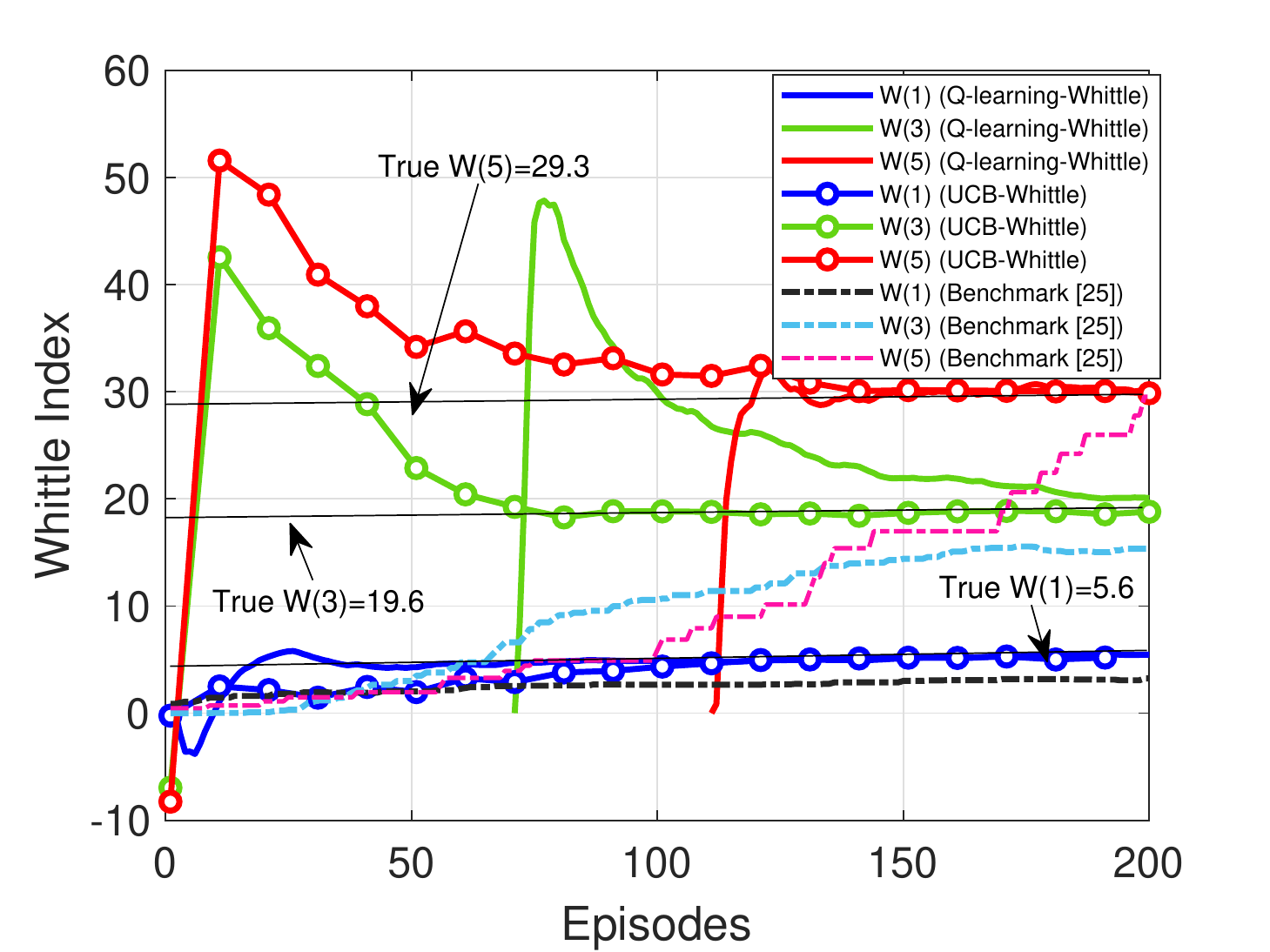}
\caption{Convergence of UCB-Whittle and Q-learning-Whittle. }
\label{Fig.convergence}
	\end{minipage}
	\begin{minipage}[b]{.32\textwidth}
	   \includegraphics[width=0.95\textwidth]{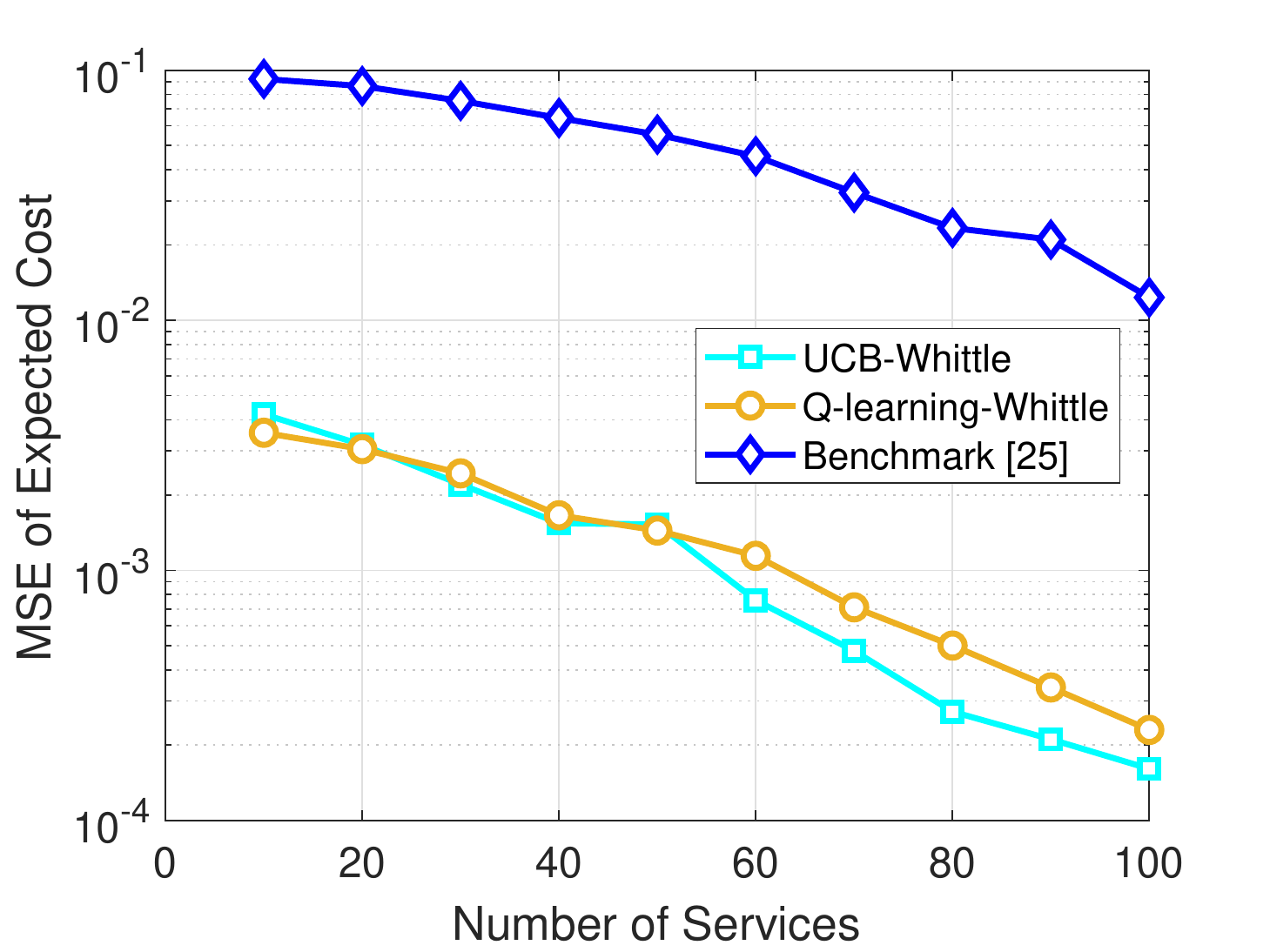}
\caption{MSE of expected cost comparison.} \label{Fig.cost}
	\end{minipage}
\end{figure*}

We further compute the mean squared error (MSE) of the expected costs achieved by UCB-Whittle and Q-learning-Whittle with that of the policy derived by the true Whittle index (with known system parameters) with the number of bandits varies from $N=10$ to $N=100$. We assume every bandit belong to one of $5$ different types with each type has equal number of bandits. 
The number of training episodes is set as $K=150$.
The cost function is $C_i(S_i,a)=S_i/\lambda_i$, where $\lambda_i\in\{10, 15,  20,  25,  30\}$.   {Since the learned Whittle indices converge to the true Whittle indices as the number of episodes goes large, }
the MSE of expected costs achieved by UCB-Whittle and Q-learning-Whittle converge. As shown in Figure \ref{Fig.cost}, the MSE of expected costs monotonically decrease as the number of services grows. This indicates the asymptotic optimality of the proposed algorithms.  Specifically, UCB-Whittle  and Q-learning-Whittle have similar performance when number of bandits is small, and the former slightly outperforms the latter for a larger number of bandits.  Nevertheless, both our proposed algorithms significantly outperform the conventional $\epsilon$-greedy Q-learning algorithm.  This also verifies the performance of our proposed algorithms.

\section{Related Work}\label{sec:related}

\textbf{Service Placement.} The problem of service placement at the network edge has recently attracted a lot of attentions.  For example, approximation algorithms are introduced in \cite{he2018s,farhadi2019service,pasteris2019service} to find a feasible service placement to maximize the total user utility through the formulation of mixed integer linear programs.  Joint optimization of service placement and access point selection or request routing are addressed in \cite{gao2019winning,poularakis2019joint,lin2020service}. A unified service placement and request dispatching framework is proposed to optimize tradeoffs between latency and service cost in \cite{yang2015cost} while \cite{zeng2016joint} studies a similar problem in the context of image placement in fog network.  However, most of existing results are under the assumption of full knowledge of the entire resources and users or the knowledge of statistics are assumed to be known. Thus none of these works incorporate learning component.  The few works incorporate learning components on resource allocation and provisioning at cloud and edge computing are \cite{ouyang2019adaptive,chen2018spatio}. Our work differs significantly from these by posing the service placement problem at the network edge as a MDP under a continuous time model and developing an appealing simple threshold-type policy with performance guarantees that can be easily learned through a RL framework.

\textbf{Restless Multi-Armed Bandits (RMAB).}  RMAB has been applied in numerous applications for resource allocation, such as congestion control \cite{avrachenkov2013congestion}, cloud computing \cite{borkar2017index}, queueing systems \cite{archibald2009indexability}, wireless communication \cite{raghunathan2008index,liu2010indexability} and so on.  Most existing works on RMAB focus on offline setting, e.g., \cite{liu2010indexability,whittle1988restless} while one often cannot have full system information beforehand in practice.  Online RMAB has also gained attention, e.g., \cite{tekin2012online,jung2019regret,liu2011logarithmic}. While its wide applications, RMAB suffers the curse of dimensionality and is provably hard (i.e., PSPACE \cite{papadimitriou1994complexity}). One very successful heuristic has been the celebrated Whittle index policy \cite{whittle1988restless}, where the hard constraint of the number of bandits that can be used at each time is relaxed to be on average. Many studies focus on finding Whittle index policy for RMAB, e.g., \cite{kadota2018scheduling,kadota2016minimizing,guo2016index,singh2015index}, however, all are under specific assumptions and hard to generalize. Furthermore, the application of Whittle index requires full knowledge of the system, which is often not the case in practice. Many existing algorithms such as the classical adaptive control schemes and RL schemes do not exploit the special structure available in the problem. 

Recent works such as ~\cite{borkar2018reinforcement,avrachenkov2019learning,fu2019towards} have developed learning algorithms that utilize Q-learning and multi-time scale stochastic approximation algorithms. However, they lack finite-time guarantees on ``learning regret''~\cite{auer2002finite,lattimore2020bandit}. Multi time-scale stochastic approximation algorithms typically suffer from the problem of slow convergence. A naive application of existing RL algorithms such as UCRL2, Thompson Sample or RBMLE~\cite{jaksch2010near,ouyang2017learning,mete2020reward} to solve our problem, would yield a learning regret that scales linearly with the state-space size. Since the size of state-space for our problem grows exponentially with the number of services $N$ that could be placed at the server, this would mean that the performance of these algorithms is absymal, and consequently they are of limited practical value.

\section{Conclusion}\label{sec:conclusion}

In this paper, we studied the problem of optimal service placement at the network edge with the goal to minimize average service delivery latency.  The problem formulation took the form on a Markov decision process.  To overcome the computational complexity of solving the corresponding MDP, we showed that the optimal policy for the MDP associated with each service is of threshold-type and derived explicitly the Whittle indices for each service. We then developed efficient learning augmented algorithms, UCB-Whittle and Q-learning-Whittle to utilize the structure of optimal policies given that service request and delivery rates are usually unknown.  We characterized the performance of these two algorithms and also numerically demonstrated an excellent empirical performance.  

\bibliographystyle{IEEEtran}
\bibliography{refs}

\clearpage
\section{Appendix}\label{sec:app}

\subsection{Proofs on Whittle Index Policy in Section~\ref{sec:lagrangian-relaxation}}\label{sec:app-non-learning}
In this subsection, we present the proofs details of the threshold policy (Proposition~\ref{prop:threshold-policy}), stationary distribution of the threshold policy (Proposition~\ref{prop:stationary-distributions}), the indexability property of all bandits in our model (Proposition~\ref{prop:indexable}) as well as the computation of Whittle indices (Proposition~\ref{prop:whittle-index-closed}).

\subsubsection{Proof of Proposition~\ref{prop:threshold-policy}}
\begin{proof}
From our model, we have $b_i(S_i, a_i)=\lambda_i(S_i)$ and $d_i(S_i, a_i)=\mu_i(S_i)A_i$ where $A_i=1$ if service $i$ is placed on the edge server.  
Since the set of feasible policies $\Pi$ is non-empty, there exists a stationary policy $\pi^*$ that optimally solves~(\ref{def:single_mdp1}).
Let $R^*=\max\{S\in\{0, 1,\cdots\}: A_{\pi^*}(S)=0\}$, where we use the subscript $\pi^*$ to denote actions under policy $\pi^*$. Based on the definition of transition rates, we have $d_i(S_i,0)=0$ for all states $s_i<R^*$. Zero departure rate indicates that all states below $R^*$  are transient, implying the stationary probability for bandit $i$ in state $S_i$ under policy $\pi^*$ being $0$, i.e., $q_{\pi^*, i}(S_i)=0$, $\forall S_i<R^*$.  By definition, we have $A_{\pi^*, i}=0, \forall S_i\leq R^*$ and $A_{\pi^*, i}=1$, $\forall S_i>R^*.$ The average cost given by~(\ref{def:single_mdp1}) under the optimal policy $\pi^*$ then reduces to
\begin{align}
&\mathbb{E}_{\pi^*}[C_i(S_i, A_{\pi^*, i})] -W \mathbb{E}1_{\{ A_{\pi^*, i}=0\}}\nonumber\displaybreak[0]\\
=&C_i(R^*, 0) q_{\pi^*, i}(R^*)+\!\!\!\!\sum_{S_i=R^*+1}^{\infty}\!\! \! C_i(S_i, 1)q_{\pi^*, i}(S_i) - \!Wq_{\pi^*, i}(R^*)\nonumber\displaybreak[1]\\
=&\mathbb{E}_{R^*}[C_i(S_i, A_{R^*, i})] - W q_{R^*, i}(R^*),
\end{align}
indicating the threshold policy with threshold $R^*$ gives optimal performance.  This completes the proof.
\end{proof}

\subsubsection{Proof of Proposition~\ref{prop:stationary-distributions}}
\begin{proof}
For the ease of exposition, we only consider the state of service $i.$  Denote its queue length as $S_i.$
From our definition, the transition rate satisfies $\mathbb{P}(S_i, S_i+1)=\lambda_i$, and $\mathbb{P}(S_i, S_i-1)=0$ for $S_i\leq R$ and $\mathbb{P}(S_i, S_i-1)=\mu_i$ for $S_i> R$.  It is clear that the dummy states in which $R^\prime<R$ is transient because the queue length keeps increasing. Therefore, the stationary probabilities for dummy states are all zero, i.e, $q_i^R(R^\prime)=0.$  Based on standard birth-and-death process, the stationary probabilities of service $i$ can be expressed as
\begin{equation}
q_i(R+l)=\left(\frac{\lambda_i}{\mu_i}\right)^l\frac{1}{\Pi_{k=1}^l(R+k)} q_i(R).
\end{equation}
Since $q_i(R)+q_i(R+1)+q_i(R+2)+\cdots=1$, we obtain
\begin{align}
q_i(R)=1/\left(1+\sum\limits_{j=1}^{\infty}\left(\frac{\lambda_i}{\mu_i}\right)^j\frac{1}{\Pi_{k=1}^j(R+k)}\right).
\end{align}
This completes the proof.
\end{proof}

\subsubsection{Proof of Proposition~\ref{prop:indexable}}
\begin{proof}
Since the optimal policy for~(\ref{eq:obj-continous-relaxed}) is a threshold policy, for a given subsidy $W$, the optimal average cost under threshold $R$ will be $h(W):=\min_R\{h^R(W)\}$, where
\begin{align}\label{eq:threshold_cost}
h^R(W):=\sum_{s=0}^{\infty}C_i(s, A_{R, i})q_{R, i}(s)-W\sum_{s=0}^R q_{R, i}(s).
\end{align}
It is easy to show that $h(W)$ is a concave non-increasing function since it is a lower envelope of linear non-increasing functions in $W$, i.e., $h^R(W)<h^R(W^\prime)$ if $W<W^\prime.$  This means we can choose a larger threshold $R$ when $W$ increases to further decrease the total cost according to \eqref{eq:threshold_cost}, i.e, $R(W)\subseteq R(W^\prime)$ when $W<W^\prime$.

Next we show that $\sum_{s=0}^R q_i^R(s)$ is strictly increasing in $R$.  From Proposition~\ref{prop:stationary-distributions}, we have
\begin{align}
\sum_{s=0}^R q_i^R(s)=q_i^{R}(R^\prime)=\frac{1}{1+\sum_{j=1}^{\infty}\left(\frac{\lambda}{\mu}\right)^{j}\prod_{k=0}^{j-1} \frac{1}{R+1+k}},
\end{align}
which is strictly increasing in $R.$
\end{proof}

\subsubsection{Proof of Proposition~\ref{prop:whittle-index-closed}}
\begin{proof}
Consider the single-agent MDP for service $i$, that is parametrized by $W$. Note that from Proposition~\ref{prop:threshold-policy} we have that for this MDP, a threshold policy is optimal. It follows from the definition of Whittle index, that the performance of the policy with threshold at $R$ should be equal to the performance of a policy with threshold at $R+1$, i.e.,
\begin{align}
&\mathbb{E}_R[C_i(S_i, A_i(S_i))] -W \mathbb{E}_{R}1_{\{ A_i(S_i)=0\}} \nonumber\\
=&\mathbb{E}_{R+1}[C_i(S_i, A_i(S_i))] -W \mathbb{E}_{R+1}1_{\{ A_i(S_i)=0\}},
\end{align}
where a sub-script denotes the fact that the associated quantities involve a threshold policy with the value of threshold set equal to this value. Since the evolution of the $i$-th bandit is described by a  birth-and-death process, we have $\mathbb{E}_R 1_{\{ A_i(S_i)=0\}}=\sum_{S_i=0}^R q_{R, i}(S_i)$, where $q_{R, i}(S_i)$ is the stationary probability for bandit $i$ in state $s$ under the threshold policy $R$ (Proposition~\ref{prop:stationary-distributions}). This completes the proof.
\end{proof}

\subsection{Proofs on UCB-Whittle}\label{sec:appendix-ucb}
In this subsection, we provide the proofs of Lemmas~\ref{lemma:equiv}-\ref{lemma:3} in Section~\ref{sec:learning-proof}.

\subsubsection{Proof of Lemma~\ref{lemma:equiv}}
\begin{proof}
Note that the operation of obtaining $\tilde{\te}_k$ could equivalently be re-written as follows
\begin{align}\label{eq:lemma:equiv1}
\min_{\te\in \Theta} \min_{\te\up\in \cO(t)} \bar{C}(W_{\te};\te\up).
\end{align}
After exchanging the order of minimization,~(\ref{eq:lemma:equiv1}) reduces to
\begin{align}\label{eq:lemma:equiv2}
\min_{\te\up\in \cO(t)} \min_{\te\in \Theta} \bar{C}(W_{\te};\te\up).
\end{align}
Since the Whittle index policy is asymptotically optimal~\cite{verloop2016asymptotically} and there are only finitely many candidate policies $W_{\te}$ in the inner-minimization problem above,~(\ref{eq:lemma:equiv2}) reduces to 
\begin{align}\label{eq:lemma:equiv3}
\min_{\te\up\in \cO(t)}  \bar{C}(W_{\te\up};\te\up).
\end{align}
However, this is exactly the problem~\eqref{def:tilde_te} that needs to be solved in order to obtain $\tilde{\te}_k$.
\end{proof}

\subsubsection{Proof of Lemma~\ref{lemma:ci_fail}}
\begin{proof}
Fix the number of samples used for estimating $m_i$ at $n_i$. It then follows from Lemma~\ref{lemma:exp_concentration} with the parameter $t$ set equal to $\sqrt{K_1 n_i \log\left(\frac{N\cdot T^{b}}{\delta} \right)}$ that the probability with which the estimate of service rate lies outside the confidence ball, is less than $  \exp\left( -\frac{K_1}{2\tau^{2}_h} \log\left(\frac{N\cdot T^{b}}{\delta} \right) \right)+\epsilon$. Letting {$K_1 = 2\tau^2_h$}, we have that this probability is upper-bounded by $\frac{\delta}{NT^b}$.

Since $n_i$ can possibly assume $T$ number of values, and {$b>1$}, the proof follows by using union bound on the number of estimation samples, and users.
\end{proof}

\subsubsection{Proof of Lemma~\ref{lemma:true_lb}}
\begin{proof} It follows from the equivalent definition~\eqref{def:index} of UCB-Whittle developed in Lemma~\ref{lemma:equiv} that we have $I_{\te_0}(t)\le \bar{C}(W_{\te_0};\te_0)$ if $\te_0\in\cO(t)$. The proof then follows from Lemma~\ref{lemma:ci_fail}.
\end{proof}

\subsubsection{Proof of Lemma~\ref{lemma:ergodic}}
\begin{proof}
Note that in order for the average cost to be finite, each service $i$ should be allocated a non-zero fraction of bandwidth at the server. Since under Assumption~\ref{assum:1} $W_{\te}$ the controlled process $\vec{S}(t)$ is ergodic and has finite cost, on $\cG_1$ the cost incurred under $W_{\tilde{\te}_k}$ is definitely finite, and hence each service is provided a non-zero fraction of the total bandwidth at the server. The proof then follows since there are only finitely many choices for $W_{\tilde{\te}_k}$ .
\end{proof}

\subsubsection{Proof of Lemma~\ref{lemma:3}}
\begin{proof}
On $\cG_2$ we have that the number of samples of $i$ satisfy $N_i(\tau_{k_1})\ge \frac{\eta H}{2} K_3 \log\left(\frac{N T^{b}}{\delta} \right)$, and hence it follows from the definition of confidence intervals~\eqref{def:radius} that on $\cG_1\cap \cG_2$ we have $\|\te_0 -\hat{\te}(\tau_k)\|_{\infty} \le \sqrt{\frac{2 K_1}{\eta H K_3}}\le \epsilon(\Delta)$. Let $\te^{\star}$ denote the solution of the problem~\eqref{def:index}.
It follows from Lemma~\ref{lemma:wc} that the following holds on $\cG$,
\begin{align*}
    |\bar{C}(W_{\te}; \te^{\star})-\bar{C}(W_{\te}; \te_0)|&\le \Delta,
    \mbox{ or } |I_{\te}(t) -\bar{C}(W_{\te}; \te_0)|\le \Delta.
\end{align*}
This completes the proof.  
\end{proof}

\subsubsection{Additional Results}
In this subsection, we provide the proofs of three lemmas that are needed for the proof of Theorem~\ref{th:ucb_regret}.

\begin{lemma}\label{lemma:wc}
For every $\delta_1>0$, there exists an $\epsilon(\delta_1)>0$ such that
\begin{align*}
    |\bar{C}(W_{\tilde{\te}}; \te\up)-\bar{C}(W_{\tilde{\te}}; \te)|\le \delta_1,~\forall \tilde{\te} \in \Theta,
\end{align*}
for all $\te\up$ satisfying  $\|\te\up-\te\|_{\infty}\le \epsilon(\delta_1)$.
\end{lemma}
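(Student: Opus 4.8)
\noindent The plan is to exploit two structural facts. First, once $\tilde{\te}$ is fixed the Whittle rule $W_{\tilde{\te}}$ is a single \emph{fixed} stationary policy: it ranks the services by the indices $\{W_{i,\tilde{\te}_i}(\cdot)\}$ and activates the top $K$, and this rule depends on $\tilde{\te}$ but \emph{not} on the parameter used to evaluate it; hence $\bar{C}(W_{\tilde{\te}};\cdot)$ is simply the map sending a system parameter to the infinite-horizon average cost of \emph{one} controlled Markov chain. Second, $\Theta$ is finite. Consequently it suffices to prove, for each $\tilde{\te}\in\Theta$, that $\te\mapsto\bar{C}(W_{\tilde{\te}};\te)$ is continuous at $\te_0$ with some modulus $\epsilon_{\tilde{\te}}(\cdot)$, and then take $\epsilon(\delta_1):=\min_{\tilde{\te}\in\Theta}\epsilon_{\tilde{\te}}(\delta_1)>0$; the claimed bound then holds simultaneously over all $\tilde{\te}\in\Theta$. (In Lemma~\ref{lemma:3} this is invoked with $(\te,\te\up)=(\te_0,\te^{\star})$ and $\delta_1=\Delta$.)

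To prove continuity of one such map, fix $\tilde{\te}$, write $\pi:=W_{\tilde{\te}}$, and regard $\bar{C}(W_{\tilde{\te}};\te)$ as the time-average cost of the CTMC with generator $Q^{\pi}_{\te}$, whose off-diagonal entries are the birth/death rates $b_i=\lambda_i$ and $d_i(\vec{s})=\mu_i s_i a_i(\vec{s})$; since $\Theta$ is a finite set of strictly positive vectors, the rates range over a fixed compact subset of $(0,\infty)$ and these entries are affine, in particular continuous, in $\te$. By Assumption~\ref{assum:1} the chain is ergodic at $\te_0$; moreover, because the Whittle index is non-decreasing in the queue length (Proposition~\ref{prop:whittle-index-closed}), whenever $\sum_i s_i$ is large some queue is long, hence active, hence drained at rate $\mu_i s_i$. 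This gives a geometric Foster--Lyapunov drift for $V(\vec{s}):=1+\sum_i s_i$, namely $Q^{\pi}_{\te}V(\vec{s})\le-\beta V(\vec{s})+b\,\id\{\vec{s}\in F\}$ with $\beta>0$, $b<\infty$ and a finite set $F$, and the constants can be taken \emph{uniform} over a neighbourhood $\cU$ of $\te_0$ (e.g.\ $\beta=\tfrac{1}{2N}\min_i\mu_i$). From this I would extract: (i) $Q^{\pi}_{\te}$ is positive recurrent, and $\bar{C}(W_{\tilde{\te}};\te)$ finite, for every $\te\in\cU$; (ii) a uniform moment bound $\sup_{\te\in\cU}\bE_{p_{\te}}[V]<\infty$, where $p_{\te}$ is the stationary law; and (iii), by the $V$-weighted perturbation/continuity theory of geometrically ergodic Markov processes, continuity at $\te_0$ of $\te\mapsto\bE_{p_{\te}}[g]$ for every $g$ dominated by a constant multiple of $V$. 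Since the cost rate $\sum_i s_i m\llm$ is itself continuous in $\te$ and bounded by $(\max_i m\llm)\,V$, combining these with $\bar{C}(W_{\tilde{\te}};\te)=\bE_{p_{\te}}\big[\sum_i s_i m\llm\big]$ yields continuity of $\bar{C}(W_{\tilde{\te}};\cdot)$ at $\te_0$.

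I expect steps (ii)--(iii), the uniform-in-$\te$ drift, to be the only real obstacle: the state space and the cost are unbounded, so total-variation continuity of the stationary law would not be enough, and one must verify that running $W_{\tilde{\te}}$ on a nearby system still enforces a common Lyapunov drift --- which is exactly where the monotonicity of the Whittle index in the queue length (Proposition~\ref{prop:whittle-index-closed}) and the state-proportional death rate $\mu_i s_i$ are needed. Two simplifications are worth recording. If the queue lengths are truncated to a finite range, as in the simulations of Section~\ref{sec:sim}, then $Q^{\pi}_{\te}$ acts on a finite state space, the chain is ergodic there, and its stationary vector --- hence $\bar{C}$ --- is continuous in $\te$ by elementary linear algebra, with no drift estimate required. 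And in the regime where the algorithm evaluates $\bar{C}(W_{\te};\te)$ via the decoupled relaxed problem (cf.\ the computational-complexity discussion in Section~\ref{sec:learning}), the per-service thresholds $R_i$ induced by $W_{\tilde{\te}}$ are frozen once $\tilde{\te}$ is fixed, so by Propositions~\ref{prop:threshold-policy}, \ref{prop:stationary-distributions} and~\ref{prop:whittle-index-closed} one has $\bar{C}(W_{\tilde{\te}};\te)=\sum_i m\llm\sum_{l\ge 0}(R_i+l)\,q_i^{R_i}(R_i+l)$, a finite sum of ratios of super-geometrically convergent power series in $\lambda_i/\mu_i$ with denominator at least $1$; this is real-analytic, in particular locally Lipschitz, in $\te$, which gives continuity with an explicit modulus.
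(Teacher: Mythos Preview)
Your overall architecture matches the paper's: fix the policy $\pi=W_{\tilde{\te}}$, argue that $\te\mapsto\bar{C}(\pi;\te)$ is continuous at $\te_0$, and then take a minimum over the finitely many $\tilde{\te}\in\Theta$ to get a common modulus $\epsilon(\delta_1)$. The paper, however, does not build a Foster--Lyapunov envelope at all. It simply invokes Theorem~1 of Karr (1975) on weak convergence of stationary distributions: for any $\te^{(n)}\to\te$, the invariant measures $p_{\te^{(n)}}^{\pi}$ converge weakly to $p_{\te}^{\pi}$, and the conclusion follows from finiteness of $\Theta$.

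What you do differently, and what it buys: your route through a uniform geometric drift $Q^{\pi}_{\te}V\le-\beta V+b\,\id_{F}$ and $V$-norm perturbation theory is more laborious but also more honest about the unbounded-cost issue. Karr's theorem delivers weak convergence of $p_{\te^{(n)}}^{\pi}$, which by itself does \emph{not} imply $\bE_{p_{\te^{(n)}}^{\pi}}\!\big[\sum_i s_i m_{i,\lambda}\big]\to\bE_{p_{\te}^{\pi}}\!\big[\sum_i s_i m_{i,\lambda}\big]$ without an additional uniform-integrability step; the paper glosses over this. Your drift bound supplies exactly that missing tightness (a uniform $V$-moment), so your argument actually closes a gap the paper leaves open. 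Your two fallback observations --- the finite truncated state space used in Section~\ref{sec:sim}, and the explicit analytic formula coming from the decoupled relaxation --- are clean alternative ways to sidestep the issue entirely. The one point in your sketch that would need more care is the claim that ``some queue is long, hence active'': Whittle indices are service-specific, so the longest queue need not have the largest index across services, and you would have to argue more carefully (e.g.\ that for each $i$ the index $W_{i,\tilde{\te}_i}(s)\to\infty$ as $s\to\infty$, so any unboundedly long queue is eventually ranked in the top $K$) to secure the uniform $\beta$.
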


\begin{proof} Fix the control policy at $W_{\tilde{\te}}$, and consider a sequence $\te^{(n)}, n\in \bN$ of parameters satisfying $\te^{(n)}\to \te$. It follows from Theorem 1 of~\cite{karr1975weak} that the sequence of stationary probability measures corresponding to the system in which the parameter is $\te^{(n)}$, and the controller is $W_{\tilde{\te}}$, converges to the measure corresponding to using policy $W_{\tilde{\te}}$ on the system with parameter $\te$. The proof is then completed by constructing $\te^{(n)}$ from elements of $\Theta$, and also observing that we have finitely many possibilities for $\tilde{\te}$.\end{proof}

Let $k_1 = O(\log T)$. Recall that $\cG_2$ was defined as follows,
\begin{align*}
\cG_2 = \left\{\omega \in \Omega: N_i(\tau_{k_1}) \ge \frac{y_i}{2},\forall i \right\},
\end{align*}
where in the above
\begin{align*}
    y_i = \sum_{k=0}^{k_1} \bE\left( \sum_{t\in\cE_k} \id\{ u(t)=i\}| \mathcal{F}_{\tau_k}\right).
\end{align*}
\begin{lemma}\label{lemma:g2}
We have
\begin{align*}
  \bP\left( \cG_2 \right) \ge 1- \frac{1}{T}.
\end{align*}
\end{lemma}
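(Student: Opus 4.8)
The claim is a high-probability lower bound on the event $\cG_2 = \{N_i(\tau_{k_1}) \ge y_i/2~\forall i\}$, where $N_i(\tau_{k_1})$ counts the number of activation (and arrival) samples collected for service $i$ by the end of episode $k_1$, and $y_i = \sum_{k=0}^{k_1}\bE(\sum_{t\in\cE_k}\id\{u(t)=i\}\mid\mathcal{F}_{\tau_k})$ is the conditional mean of that count. The plan is to view $N_i(\tau_{k_1})$ as a sum of bounded increments across episodes and apply a Bernstein/Azuma-type concentration inequality for the martingale obtained by subtracting conditional means, then take a union bound over the $N$ services and over the $\cG_1$ event that guarantees the per-episode lower bound $\eta$ from Lemma~\ref{lemma:ergodic}.

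\textbf{Step 1 (martingale decomposition).} Write $N_i(\tau_{k_1}) = \sum_{k=1}^{k_1} X_{i,k}$, where $X_{i,k} := \sum_{t\in\cE_k}\id\{u(t)=i\}$ is the number of times service $i$ is activated in episode $k$. Each episode has length $H$, so $0 \le X_{i,k} \le H$, i.e. the increments are bounded. Let $Z_{i,k} := X_{i,k} - \bE(X_{i,k}\mid\mathcal{F}_{\tau_k})$; then $\{Z_{i,k}\}$ is a martingale difference sequence with respect to $\{\mathcal{F}_{\tau_k}\}$, each term bounded in $[-H, H]$. By Lemma~\ref{lemma:ergodic}, on $\cG_1$ we have $\bE(X_{i,k}\mid\mathcal{F}_{\tau_k}) \ge \eta H$ for every $k$, hence $y_i \ge \eta H k_1$ on $\cG_1$.

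\textbf{Step 2 (concentration).} Apply the Azuma--Hoeffding inequality to $\sum_{k=1}^{k_1} Z_{i,k}$: for any $u>0$,
\begin{align*}
\bP\left( \sum_{k=1}^{k_1} Z_{i,k} \le -u \right) \le \exp\left( -\frac{u^2}{2 k_1 H^2} \right).
\end{align*}
Choosing $u = y_i/2$ and using $y_i \ge \eta H k_1$ gives, on $\cG_1$,
\begin{align*}
\bP\left( N_i(\tau_{k_1}) \le \frac{y_i}{2} \,\Big|\, \cG_1 \right) \le \exp\left( -\frac{y_i^2}{8 k_1 H^2} \right) \le \exp\left( -\frac{\eta^2 k_1}{8} \right).
\end{align*}
Since $k_1 = K_3 \log(N T^b/\delta) = \Theta(\log T)$ (recall $\delta = 1/T$), the right-hand side is at most a polynomially small quantity in $T$; by choosing the constant $K_3$ large enough (or absorbing into $b$), this is $O(1/(NT))$. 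A union bound over the $N$ services then yields $\bP(\cG_2^c \mid \cG_1) \le 1/(2T)$, and combining with $\bP(\cG_1^c) \le \delta + \epsilon$ and the choice $\delta = 1/T$ gives $\bP(\cG_2) \ge 1 - 1/T$ (after folding the $\cG_1^c$ contribution into the stated bound, which is how the paper accounts for it in the proof of Theorem~\ref{th:ucb_regret}).

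\textbf{Main obstacle.} The delicate point is handling the conditioning on $\cG_1$ cleanly: the bound $\bE(X_{i,k}\mid\mathcal{F}_{\tau_k})\ge\eta H$ from Lemma~\ref{lemma:ergodic} holds only on $\cG_1$, not unconditionally, so one must either (i) work with a stopped/truncated process that agrees with the true dynamics on $\cG_1$ and carry the Azuma bound through on that auxiliary process, or (ii) state the concentration for the conditional measure $\bP(\cdot\mid\cG_1)$ and then reassemble using $\bP(\cG_2) \ge \bP(\cG_2\mid\cG_1)\bP(\cG_1) \ge (1-1/(2T))(1-\delta-\epsilon)$. I would take route (i): define $\tilde X_{i,k}$ to equal $X_{i,k}$ up to the first episode where the confidence interval fails and freeze it afterward; since $\cG_1 = \{\text{no failure up to }T\}$, the two processes coincide on $\cG_1$, the per-episode lower bound on conditional means holds for $\tilde X_{i,k}$ by construction, and the Azuma argument applies verbatim. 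This keeps the calculation routine and isolates the only real subtlety. The rest — boundedness of increments, the union bound over services, and the choice of $K_3$ to make the exponential term $o(1/T)$ — is mechanical.
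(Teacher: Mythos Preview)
Your proposal is correct and follows essentially the same route as the paper: form the martingale difference sequence $X_{i,k}-\bE(X_{i,k}\mid\mathcal{F}_{\tau_k})$, bound increments by $H$, apply Azuma--Hoeffding with deviation of order $\eta H k_1$, and union-bound over services. The paper sets $\epsilon = k_1\eta H/2$ directly (rather than $u=y_i/2$) and invokes the condition $K_3\ge 8/\eta^2$ to get the exponent below $\delta/(NT^b)$; your treatment is otherwise identical, and in fact more careful on one point---the paper silently uses the lower bound from Lemma~\ref{lemma:ergodic} as if it held unconditionally, whereas you correctly flag that it holds only on $\cG_1$ and propose a clean stopped-process fix.
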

\begin{proof}
Define
\begin{align*}
    m_k :&= \left(\sum_{t\in\cE_k} \id\{ u(t)=i\}\right) -   \bE\left( \sum_{t\in\cE_k} \id\{ u(t)=i\}| \mathcal{F}_{\tau_k}\right),\\
    k &= 1,2,\ldots.
\end{align*}
Clearly, $\{m_k\}$ is a martingale difference sequence that is uniformly bounded by the duration of a single episode $H$.
Recall~\eqref{def:k1} that $k_1 = K_3 \log\left(\frac{N T^{b}}{\delta} \right)$. Let us fix the number of episodes at $k_1$. We have the following from the Azuma-Hoeffding's inequality~\cite{azuma1967weighted,lattimore2020bandit},
\begin{align*}
    \bP\left( \Big | \sum_{k=0}^{k_1} m_k  \Big|\ge  \epsilon \right) \le \exp\left(- \frac{\epsilon^2}{2H^2 k_1} \right).
\end{align*}
Since each time a service is placed, we obtain a sample for estimating the corresponding rate of download, we obtain,
\begin{align*}
    &\bP\left( \Big | N\lmu(\tau_{k_1})- \sum_{k=0}^{k_1}\bE\left( \sum_{t\in\cE_k} \id\{ u(t)=i\}\right)  \Big|\ge  \epsilon \right)\\
    &\le \exp\left(- \frac{\epsilon^2}{2H^2 k_1} \right),
\end{align*}
where $\epsilon>0$. Upon letting $\epsilon= \frac{k_1 \eta H}{2}$, and observing that from Lemma~\ref{lemma:ergodic} we have that $\sum_{k=0}^{k_1}\bE\left( \sum_{t\in\cE_k} \id\{ u(t)=i\}\right) \ge \eta H k_1$, the above inequality reduces to
\begin{align*}
    \bP\left(  N\lmu(\tau_{k_1})\le  \eta H k_1 \slash 2 \right)
    &\le \exp\left(- \frac{K_3 \eta^2}{8} \log\left(\frac{N T^{b}}{\delta} \right) \right)\displaybreak[0]\\
    &\le \frac{\delta}{N T^b}
    < \frac{\delta}{N T},
\end{align*}
where the second inequality follows since {$K_3 \ge \frac{8}{\eta^2}$}, and the last follows since $b>1$. Proof is then completed by using union bound on users.
\end{proof}

The following result is taken from~\cite{dubhashi2009concentration}.
\begin{lemma}\label{lemma:exp_concentration}
Let $x_1,x_2,\ldots,x_n$ be i.i.d. exponentially distributed random variables with mean $1\slash \lambda$. Let $\tau_h$ be a threshold that satisfies the following
\begin{align}
\exp(-\lambda \tau_h) \le \epsilon,
\text{or } \tau_h \ge \frac{-\log \epsilon}{\lambda}.\label{cond:threshold_cond}
\end{align}
Then
\begin{align*}
\bP\left( \sum_{\ell=1}^{n} x_{\ell} - n\slash \lambda > t \right) \le \exp\left( -\frac{t^2}{2n\tau^{2}_h}  \right)+\epsilon ,
\end{align*}
or equivalently
\begin{align*}
\bP\left( \sum_{\ell=1}^{n} x_{\ell} \slash n - 1\slash \lambda > t\slash n \right) \le \exp\left( -\frac{t^2}{2n\tau^{2}_h}  \right)+\epsilon.
\end{align*}
\end{lemma}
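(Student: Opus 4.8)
The plan is to reduce the claim to Hoeffding's inequality for bounded random variables by \emph{truncating} each $x_\ell$ at the level $\tau_h$, and then to separately account for the (small) probability that a truncation actually took place. The main difficulty to keep in mind is that exponential variables are only sub-exponential, not sub-Gaussian, so a direct Chernoff/MGF estimate on $\sum_\ell x_\ell$ would produce a tail that degrades to something like $e^{-\lambda t}$ rather than the purely Gaussian $\exp(-t^2/(2n\tau_h^2))$ that is asked for; truncation is precisely the device that restores the Gaussian shape, at the price of the additive error that is controlled by condition~\eqref{cond:threshold_cond}.

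Concretely, I would set $\bar{x}_\ell := \min\{x_\ell,\tau_h\}$ for $\ell=1,\dots,n$. These are i.i.d., take values in the interval $[0,\tau_h]$, and because truncation can only decrease a variable, $\bE[\bar{x}_\ell]\le\bE[x_\ell]=1/\lambda$, so $\bE\big[\sum_{\ell}\bar{x}_\ell\big]\le n/\lambda$. On the event $\{x_\ell\le\tau_h\ \forall\,\ell\}$ one has $\sum_\ell x_\ell=\sum_\ell\bar{x}_\ell$, hence
\begin{align*}
\bP\Big(\sum_{\ell=1}^n x_\ell - n/\lambda > t\Big)\ \le\ \bP\big(\exists\,\ell:\,x_\ell>\tau_h\big)\ +\ \bP\Big(\sum_{\ell=1}^n \bar{x}_\ell - n/\lambda > t\Big).
\end{align*}
For the first term I use a union bound together with the exponential tail: $\bP(\exists\,\ell:\,x_\ell>\tau_h)\le n\,\bP(x_1>\tau_h)=n\,e^{-\lambda\tau_h}$, which is at most $n\epsilon$ under~\eqref{cond:threshold_cond} (and exactly $\epsilon$ if one imposes that same condition with $\epsilon$ replaced by $\epsilon/n$; the extra factor only changes constants in the downstream usage). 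For the second term, since $\bE[\sum_\ell\bar{x}_\ell]\le n/\lambda$ we have the inclusion $\{\sum_\ell\bar{x}_\ell - n/\lambda > t\}\subseteq\{\sum_\ell\bar{x}_\ell - \bE[\sum_\ell\bar{x}_\ell] > t\}$, and Hoeffding's inequality applied to a sum of $n$ independent variables each confined to an interval of length $\tau_h$ gives $\bP(\sum_\ell\bar{x}_\ell - \bE[\sum_\ell\bar{x}_\ell] > t)\le\exp\!\big(-2t^2/(n\tau_h^2)\big)\le\exp\!\big(-t^2/(2n\tau_h^2)\big)$.

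Adding the two estimates yields the first displayed bound of the lemma, and the ``equivalently'' form is the very same inequality after dividing the event inside the probability by $n$. The step I expect to be most delicate is the coupling between the truncation level and the confidence parameter: the \emph{same} $\tau_h$ must simultaneously make the truncation-error probability small and serve as the range parameter in the Gaussian rate, which is exactly why condition~\eqref{cond:threshold_cond} is phrased as $\tau_h\ge(-\log\epsilon)/\lambda$, and why this $\tau_h$ then reappears through the choice $K_1=2\tau_h^2$ in Lemma~\ref{lemma:ci_fail}.
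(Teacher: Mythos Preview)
The paper does not give its own proof of this lemma; it merely states that ``the following result is taken from~\cite{dubhashi2009concentration}.'' Your truncation-plus-Hoeffding argument is exactly the standard route to such a bound (and is the technique used in that reference), so there is nothing substantive to compare against beyond noting that you have correctly reconstructed the intended argument. Your observation about the factor $n$ in the additive term is also accurate: a union bound over the $n$ variables produces $n\epsilon$ rather than $\epsilon$, so the inequality holds in the form stated only if the threshold condition~\eqref{cond:threshold_cond} is read with $\epsilon/n$ in place of $\epsilon$; as you say, this affects only constants in the downstream application (Lemma~\ref{lemma:ci_fail}), where a union bound over $t\le T$ and over $N$ services is taken anyway.
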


\end{document}